\begin{document}
\date{}


\title{\LARGE\bf Mathematical Modeling of Competition in Sponsored Search Market}
\author{Jian Liu \hspace{0.15in} Dah-Ming Chiu\\
        Department of Information Engineering\\
        The Chinese University of Hong Kong\\
        Email: \{lj008, dmchiu\}@ie.cuhk.edu.hk}

\maketitle \thispagestyle{empty}

\newtheorem{theo}{Theorem}
\newtheorem{prop}{Proposition}
\newtheorem{lem}{Lemma}
\theoremstyle{definition}
\newtheorem{defn}{Definition}
\newtheorem{expl}{Example}
\newtheorem{assumption}{Assumption}
\theoremstyle{remark}
\newtheorem{rem}{Remark}

\begin{abstract}
Sponsored search mechanisms have drawn much attention from both academic community and industry in
recent years since the seminal papers of \cite{SellingBillionsDollar} and \cite{PositionAuction}.
However, most of the existing literature concentrates on the mechanism design and analysis within the
scope of only \emph{one} search engine in the market. In this paper we propose a mathematical
framework for modeling the interaction of publishers, advertisers and end users in a
\emph{competitive} market. We first consider the monopoly market model and provide optimal solutions
for both \emph{ex ante} and \emph{ex post} cases, which represents the \emph{long-term} and
\emph{short-term} revenues of search engines respectively. We then analyze the strategic behaviors of
end users and advertisers under duopoly and prove the existence of equilibrium for both search
engines to co-exist from \emph{ex-post} perspective. To show the more general \emph{ex ante} results,
we carry out extensive simulations under different parameter settings. Our analysis and observation
in this work can provide useful insight in regulating the sponsored search market and protecting the
interests of advertisers and end users.
\end{abstract}

\section{Introduction}
Internet advertising has become a main source of revenue for primary search engines nowadays.
According to the newly-released report by Interactive Advertising Bureau and PricewaterhouseCoopers
\cite{IABreport}, Internet advertising in the United States reached \$22.7 billion in total revenue
for the full year of 2009, where sponsored search revenue accounted for 47 percent of the total
revenue.

A typical Internet search market consists of three parties: \emph{publishers} (i.e., search engines),
\emph{advertisers} and \emph{end users}. In the current age of information explosion, more and more
people rely on search engines to pin down their favored products or services. Whenever a query is
submitted to the engines by end users, their \emph{intents} or \emph{interests} can be potentially
captured by the engines through the inputted keywords. These intents of search users can then be sold
by search engines to companies who are interested in targeting their products to these users.
Nowadays, major search engine operators like Google, Yahoo! and Microsoft display advertisements in
the form of sponsored links, which appears alongside the algorithmic links (also known as organic
links) in the search results pages. For each keyword, there are usually more than one available
advertising slot in the search engine results page. How to effectively allocate these slots and
charge the advertisers have been studied and discussed extensively in recent years among people in
both academia and industry. Take Google's AdWords program for example. In this advertising program,
advertisers could choose multiple keywords they are interested in, and for each keyword indicate the
maximal willingness to pay for each click and the budget to spend over a period of time. Whenever a
user clicks on the sponsored link and is re-directed to the advertisers' site, certain payment is
charged by the program until the advertisers' budget is used up.

Most of the existing works focus on the interaction of the three parties within the scope of only
\emph{one} search engine's advertising system, and these results and suggestions from researchers did
greatly improve the efficiency of mechanism held in major search engine companies. For example, the
transition from generalized first price auction to generalized second price auction, from payment per
impression to payment per click, from bid-based ranking to quality-based ranking and so on
\cite{SponSearchHistory, SponSearchAuc}. However, considering there is usually more than one company
providing search service in the market, one natural question would be how the market would evolve
when there exists competition between \emph{multiple} search engines. In particular, will all users
and advertisers gradually concentrate to one leading engine or will the ``inferior'' engines still
earn enough profits to survive when competing with the leading one? What would be the consequences if
one search engine monopolizes the market? Should governments take anti-trust action against potential
cooperation among major search engines? These concerns arise from the current situation of high
levels of concentration in search engine market: Google is widely considered to possess the leading
technology and obtain the largest market shares in most countries and regions, followed by Yahoo! and
Microsoft Bing.

This paper aims to formulate a reasonable model to study the competition between two search engine
operators and help to address some of the intriguing questions mentioned above. We will consider a
three-stage dynamic game model. In stage I, the two operators' services determine how the market of
end users is split. In stage II, the two operators simultaneously determine their prices to
advertisers. In stage III, the advertisers choose the operator in which they can obtain highest
utility based on the announced prices in stage II. Each operator wants to maximize its revenue
subject to the competition for advertisers from the other operator.

The key contributions of the paper are:
\begin{itemize}
  \item To the best of our knowledge, this paper is among the first to model the \emph{comprehensive} interaction of \emph{heterogenous} publishers,
  advertisers and users in a \emph{competitive} environment. We elaborate the whole
  process of search engine competition in attracting both users and advertisers. The technical
  details applied in practical advertising systems, such as the values and budgets of advertisers,
  are taken into consideration explicitly in our formulation.
  \item We prove the existence of Nash Equilibrium prices when allowing advertisers participate in both
  advertising systems simultaneously. Moreover, we show the comparative results between the equilibrium
  prices and monopoly price when only one search engine exists in the market.
  \item We propose two distinct points of view, i.e., the \emph{ex ante} view and the \emph{ex post} view,
  to inspect the \emph{long-term} and \emph{short-term} revenues of search engines respectively.
  Furthermore, for the ex post case, we present the detailed algorithm for calculating the exact allocation
  results for advertisers. For the ex ante case, we can carry out the simulations to illustrate the comparative results of expected
  revenues and social welfare under competition and monopoly respectively.
\end{itemize}

The paper is organized as follows. Section \ref{sec:2:related_work} presents the related work.
Section \ref{sec:3:mono_model} formulates the monopoly model and provides solutions for both the ex
ante and ex post cases. Based on the monopoly formulation, we analyze the strategic behaviors of end
users and advertisers under duopoly and find the Nash Equilibrium prices of both search engines in
Section \ref{sec:4:duo_model}. In Section \ref{sec:5:simul} we compare the results under competition
and monopoly via simulation and reveal various factors that affect the revenues and social welfare.
We conclude in Section \ref{sec:6:conclu} and outline some future research directions.

\section{Related Work}\label{sec:2:related_work}
There are mainly two lines of research work in the sponsored search area. The mainstream of
literature focuses on the interaction between advertisers and search engines and aims to understand
and devise viable mechanism for the Internet advertising market. There is significant work on the
auction mechanism held by major search engines, starting from two seminal works of
\cite{SellingBillionsDollar} and \cite{PositionAuction} which independently investigated the
``generalized second-price'' (GSP) auction prevailing in major search engines such as Google and
Yahoo!. In \cite{Truth_Auc} the authors compared the ``direct ranking'' method by Overture with the
``revenue ranking'' method by Google and proposed a \emph{truthful} mechanism named as ``laddered
auction''. Considering the non-strategyproofness property of GSP mechanism, \cite{Vindictive}
analyzed one prevalent strategy of advertisers called ``vindictive bidding'' in real-world keyword
auction. \cite{Externality:1:CascadeModel} and \cite{Externality:2:MarkovianUser} relaxed the basic
assumption of separable click-through rate in \cite{Truth_Auc} and modeled the \emph{externality
effect} among advertisements which appeared in the same search page simultaneously.
\cite{Externality:3:EffectCompetingAd} proposed a new valuation model to absorb the adverse effect of
the competing advertisements on the advertiser's value per click. There is also an abundance of works
on proposing more \emph{expressive} but still \emph{scalable} mechanism for sponsored search such as
\cite{Contexts, CostConcise, ExpressScalableAuc, Externality:4:Expressive, Hybrid}. In particular, in
\cite{Externality:4:Expressive} the advertisers were allowed to submit a two-dimensional bid $(b,
b')$ where $b$ was the bid for exclusive display and $b'$ for sharing slots with other
advertisements. In \cite{Hybrid} the authors proposed a \emph{truthful} hybrid auctions where
advertisers can make a per-impression as well as per-click bid and showed that it can generate higher
revenue for search engine compared with the pure per-click scheme.

It's worth pointing out that a few works considered the practical situation where similar keyword
auctions are held simultaneously by \emph{multiple} search engines. For example, in
\cite{CompetingAdAuctions} the authors investigated the revenue properties of two search engines with
different click-through rates which competed for the same set of advertisers. The study in
\cite{CompetingKeywordAuctions} considered competition between two search engines which differs in
their ranking rules: one applied the direct ranking method we mentioned above, and the other applied
the revenue ranking method.\footnote{In the original paper of \cite{CompetingKeywordAuctions}, the
authors used the terms of ``price-only ranking rule'' and ``quality-adjusted ranking rule'' which has
the same meaning.} We assert that this assumption of search engine difference is unrealistic since
major engines tend to use the same policy which proves to work efficiently in practice and it is
unlikely that certain engine would switch back to the obsolete rules.\footnote{One typical example is
that, in May 2002 Google first introduced the revenue ranking approach which proved to be more
efficient. And then in 2007, Yahoo! switched from prior direct ranking to revenue ranking rule
similar to Google's \cite{SponSearchHistory}.} The Nash equilibrium solution in the former paper of
\cite{CompetingAdAuctions} is also not so practical since it requires advertisers to adopt certain
randomized strategy. It's very difficult for individual advertisers to implement such complex
strategies which would incur unnecessary maintenance cost.

The other line of work is developed mainly by economists to address the broad issues of search engine
competition from social welfare perspective. \cite{Rufus} introduced a quality choice game model
where end users choose the search engine with highest quality of search results, and showed that no
Nash equilibrium exists in this game. Based on this proposition, the author argues that the search
engine market would evolve towards monopoly in the absence of necessary regulatory interventions.
\cite{Beef_up} proposed a duopoly model which has some similarity to our formulation, however, as
many of the technical details of the practical advertising system are ignored, it is doubtful whether
this can serve as an accurate model to predict the outcome of search engine market. Similarly,
\cite{Rufus} faces the same problem that the vague description of participants' utility may not be
strong enough to support the predictive conclusions in the paper.

These two lines of important work have little intersections so far: the mainstream of work
concentrates on the technical progress in designing ``better'' advertising system, and the other line
usually involves less technical details (like the budgets of advertisers in practical advertising
system) and targets the macro-effect of competitive market. In view of this, we believe that a
\emph{comprehensive} study of the current search engine ecosystem in a \emph{competitive} way is
vital for addressing many of the unresolved issues in this thriving market. Our work manages to
narrow the gap between these two directions of research and makes some initial progress in this
direction. This observation helps differentiate our work from most of the existing literature.

\section{The Monopoly Market Model}\label{sec:3:mono_model}
In this section we consider the monopoly market first, which serves as a starting point for analyzing
the more general competition market. Suppose there is only one search engine in the market servicing
a fixed set of end users and providing advertising opportunity for a set of advertisers denoted by
$\mathcal {I}$ ($|\mathcal {I}|=m$). Assuming all users are homogeneous and each of them tends to
generate the same number of impressions (or clicks) for a particular keyword (query). Since we
assumes that the search engine owns a fixed number of users, it would be able to supply a fixed
number of attentions (in the form of impressions or clicks) for advertisers. For a given interval,
let the supply of attentions be $S$. Each advertiser $i\in\mathcal{I}$ has two private parameters:
\emph{value} $v_i$ denoting $i$'s maximal willingness to pay for each attention and \emph{budget}
$B_i$ in a given time interval (could be daily, weekly, monthly budget and so on). The search engine
needs to determine the optimal price per attention to maximize its revenue\footnote{In practice, the
optimal price is usually determined automatically by an auction mechanism. Specifically, this
automation process can be imagined as an ascending-bid auction \cite{AscendingAuc} where the
auctioneer (i.e., the search engine) iteratively raises the price until there is no excessive demand
than supply. Considering strategic issues, \cite{Truth_Auc_Budget} proposes an asymptotically
revenue-maximizing truthful mechanism. For simplicity of analysis, we ignore the detailed
implementation of auctions and assume the search engine can solve the revenue-maximizing problem
instantaneously.}:
\begin{equation*}
    R=p\cdot \min(S,D(p))=\min(p\cdot S, pD(p))
\end{equation*}
where $D(p)$ is the demand function over price $p$.

In the following analysis we consider this revenue maximization problem in two different
perspectives: the \emph{ex ante} perspective where the search engine only has an rough estimation to
the parameters of participating advertisers, and the \emph{ex post} perspective where the engine just
needs to make decision based on the \emph{submitted} parameters of advertisers. Although in practice,
the advertising systems do determine the prices only \emph{after} advertisers have submitted their
values and budgets, we assert that the ex ante view of revenue to be a natural fit for the search
engine's objective. This is because typically the interaction between search engine and advertisers
is not one-shot and would usually last for many rounds. Advertisers can actually adjust their
submitted parameters at any time to achieve better payoff. Thus the ex ante result could provide
valuable prediction of the \emph{long-term} revenue for the search engine, rather than the
\emph{short-term} profit from one particular instance of the ex post case.

\subsection{Ex ante case}
Assuming the search engine can have a rough estimation of the distribution of advertisers' values and
budgets. For simplicity, we only consider the scenario when the parameters are independent and
identically-distributed random variables. To be specific, suppose values are drawn independently from
distribution with density function $f(v)$ and CDF $F(v)$ over the range of $[\underline{v},
\overline{v}]$, and budgets are drawn independently from distribution with density function $g(B)$
and CDF $G(B)$.

After search engine announce the uniform price $p$, advertiser $i$ would make the deal if only the
value $v_i$ is larger than $p$. The quantity advertiser $i$ could purchase is constrained by the
budget $B_i$.

Therefore the expected aggregate demand under price $p$ from all advertisers would be:
\begin{equation*}
    D(p)=\sum_{i\in \mathcal I}\frac{E(B_i)}{p}\cdot\mbox{Prob}\{v_i>p\}=m\cdot \frac{E(B)}{p}[1-F(p)]
\end{equation*}
Rewrite it as:
\begin{equation}\label{eq:S_equal_D}
    p\cdot D(p)=m\cdot {E(B)}\cdot[1-F(p)]
\end{equation}
which is a non-increasing function over $p$.

We can use figure \ref{SE_Profit_over_Price_ExAnte} to illustrate the revenue of search engine over
price $p$.

\begin{figure}[ht]
  \centering
  \includegraphics[scale=0.75]{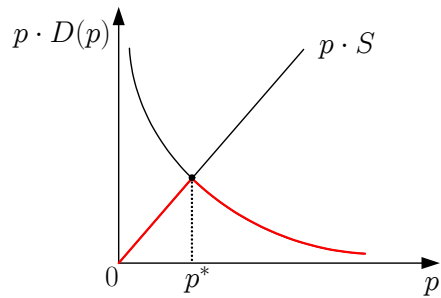}\\
  \caption{Search Engine Revenue Over Prices (Ex Ante)}\label{SE_Profit_over_Price_ExAnte}
\end{figure}

\begin{prop}\label{DemandEqualSupply}
The revenue $R$ is maximized when $S=D(p)$, i.e., when the demand equals the supply.
\end{prop}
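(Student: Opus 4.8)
The plan is to analyze the revenue function $R(p) = \min(p\cdot S,\ p\cdot D(p))$ by examining its two constituent pieces separately and showing that the maximum of their pointwise minimum must occur at their crossing point. First I would observe that $p\cdot S$ is strictly increasing in $p$ (since $S$ is a fixed positive constant), while by equation \eqref{eq:S_equal_D} we have $p\cdot D(p) = m\cdot E(B)\cdot[1-F(p)]$, which is non-increasing in $p$ because $F$ is a CDF. Thus on the region where $p\cdot S \le p\cdot D(p)$ — i.e., small prices, where $R(p) = pS$ — the revenue is increasing; and on the region where $p\cdot S \ge p\cdot D(p)$ — i.e., large prices, where $R(p) = p\cdot D(p)$ — the revenue is non-increasing.

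Next I would argue that these two regions meet: at $p = 0$ we have $p\cdot S = 0 \le p\cdot D(p)$ (the demand term is non-negative), and for $p$ large enough (in particular $p \ge \overline{v}$, where $1-F(p)=0$) we have $p\cdot D(p) = 0 \le p\cdot S$. Since $p\cdot S - p\cdot D(p)$ is continuous and changes sign, there exists a price $p^\ast$ with $p^\ast\cdot S = p^\ast\cdot D(p^\ast)$, i.e. $S = D(p^\ast)$ (noting $p^\ast>0$). Combining with the monotonicity observations: for $p \le p^\ast$, $R(p) = pS \le p^\ast S = R(p^\ast)$, and for $p \ge p^\ast$, $R(p) = p\cdot D(p) \le p^\ast\cdot D(p^\ast) = R(p^\ast)$. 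Hence $R$ attains its maximum at $p^\ast$, which is exactly the price equating demand and supply.

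The main obstacle — really the only subtlety — is handling the possibility that $p\cdot S$ and $p\cdot D(p)$ are not equal at a single clean crossing: since $p\cdot D(p)$ is only \emph{non-increasing} (not strictly decreasing), the two curves could coincide on a whole interval of prices, or the crossing could occur at a jump discontinuity of $F$. In the former case every price in that interval is an equally good maximizer and the statement still holds; in the latter case one takes $p^\ast$ to be the infimum of prices for which $pS \ge pD(p)$ and checks that the left-limit argument still gives $R(p^\ast)$ as the supremum. I would also note for completeness that if $\underline{v}$ is large enough that $S < D(p)$ for all feasible $p$, or $S$ is so large that $S > D(p)$ everywhere, then the ``interior'' crossing degenerates to a boundary of the price range, but under the natural assumption that the relevant price range is $[\underline v,\overline v]$ and $S$ lies in the range of $D(\cdot)$ the crossing is genuine; this is implicitly the regime the proposition addresses, and Figure \ref{SE_Profit_over_Price_ExAnte} depicts exactly this generic case.
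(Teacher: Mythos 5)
Your proposal is correct and rests on exactly the same two facts as the paper's own (very brief) argument: $p\cdot S$ is increasing while $p\cdot D(p)=m\,E(B)[1-F(p)]$ is non-increasing, so the maximum of their pointwise minimum sits where demand meets supply; the paper merely phrases this as a contradiction (lower the price if supply exceeds demand, raise it if demand exceeds supply). Your additional care about the existence of the crossing, possible coincidence on an interval, and degenerate boundary cases goes beyond what the paper writes but does not change the approach.
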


Proposition \ref{DemandEqualSupply} can be proved by contradiction. If supply exceeds demand under
the current price, the search engine will cut down the price to achieve higher revenue (since
$R=p\cdot D(p)$ is non-increasing over price $p$); if demand exceeds supply, the search engine can
raise the price and reach a higher revenue (since $R=p\cdot S$ is monotonically increasing over price
$p$).

\begin{expl}
Assuming $v_i$ is drawn from uniform distribution with positive support on the interval
$[\underline{v},\overline{v}]$, where $0\leq \underline{v}< \overline{v}$ and $\Delta
v=\overline{v}-\underline{v}$. Then $1-F(p)=\frac{\overline{v}-p}{\Delta v}$. From $S=D(p)$ we have:

$$p^*=\frac{m\cdot E(B)\cdot \overline{v}}{m\cdot E(B)+S\cdot \Delta v}$$

The intuition is that the more demand (larger $m$) there is, the higher market clearing price would
be; and the more supply (larger $S$) there is, the lower market clearing price is.
\end{expl}

\subsection{Ex post case}
In practical search engine advertising system, advertisers need to submit their values and budgets to
the advertising system. The search engine therefore could determine the optimal price based on the ex
post variables.

Reorder the index of advertisers such that $v_j\leq v_{j+1}, j=1,\ldots,m-1$. Then the aggregate
demand can be written as:
\begin{equation*}
    D(p)=\sum_{i\in \mathcal I^+(p)}\frac{B_i}{p}
\end{equation*}
where we define the set:
\begin{equation*}
    {\mathcal I^+(p)}\triangleq \{i\in\mathcal I:v_i>p\}
\end{equation*}
Thus $p\cdot D(p)=\sum_{i\in \mathcal I^+(p)} B_i$ is a non-increasing function over $p$ since
${\mathcal I^+(p)}$ shrinks as price $p$ increases. By letting demand equal to supply, we have
\begin{equation*}
    {p}(\mathcal{I})=\frac{\sum_{i\in\mathcal{I}^+({p})}B_i}{S}
\end{equation*}

Notice that the term of price appears in both sides of the equation. Thus in general we cannot derive
the closed-form solution for optimal price. Since $pD(p)$ is piece-wise constant and (weakly)
decreasing over $p$, we can illustrate the search engine revenue through examples in figure
\ref{fig:SE_Profit_over_Price_ExPost}. Here we assume there are four advertisers ordered such that
$v_1\leq v_2\leq v_3\leq v_4$ and initially when the price is zero, ${\mathcal I^+(p)}=\mathcal
I=\{1,2,3,4\}$. As the price exceeds $v_1$, advertiser $1$ would have no incentive to stay and
${\mathcal I^+(p)}$ becomes $\{2,3,4\}$. The crossing point of demand and supply shows that the
optimal price $p^*$ is located in $[v_1,v_2]$. To be more exact, $p^*(\mathcal{I})=(B_2+B_3+B_4)/S$.
In figure \ref{SE_Profit_over_Price_2_ExPost} we also show the other case when there is one
advertiser who is indifferent between participating and quitting the ad campaign since the optimal
price is equal to its value. In typical search engine systems like Google AdWords, after advertisers
input their maximal willingness to pay (i.e., their values) and budgets, the ad system would
automatically allocate attentions to advertisers as long as the current price doesn't exceed their
values and the budgets have not been exhausted yet. Thus here for ease of expression we can assume
that the indifferent advertiser would continue participating in the ad campaign under the budget
constraint. For example, as shown in figure \ref{SE_Profit_over_Price_2_ExPost}, the optimal price is
equal to $v_2$ (satisfying $B_3+B_4<v_2S<B_2+B_3+B_4$), advertiser 2 would consume the remaining
supply of $S-\frac{B_3+B_4}{v_2}$ and only spent $v_2S-B_3-B_4$ which is less than its budget $B_2$.

\begin{figure}[ht]
  \centering
  \subfloat[Determined Advertisers]{
    \label{SE_Profit_over_Price_ExPost}
    \begin{minipage}{0.4\textwidth}
        \centering
        \includegraphics[scale=0.75]{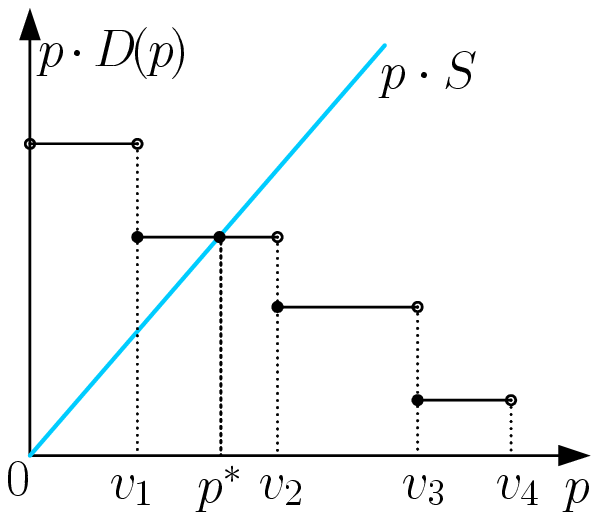}\\
    \end{minipage}
  }
  \subfloat[Undetermined Advertiser]{
    \label{SE_Profit_over_Price_2_ExPost}
    \begin{minipage}{0.4\textwidth}
        \centering
        \includegraphics[scale=0.75]{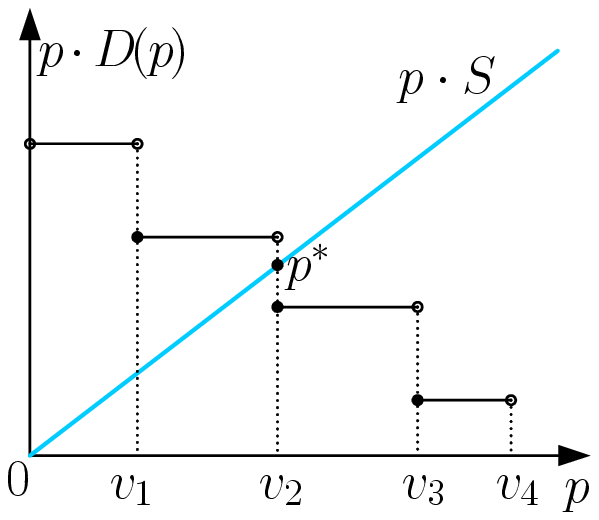}\\
    \end{minipage}
  }
  \caption{Search Engine Revenue Over Prices (Ex Post)}\label{fig:SE_Profit_over_Price_ExPost}
\end{figure}

We can now show a polynomial step algorithm for search engine to compute the optimal price. By
inputting the parameters of advertisers (assuming the indexes of advertisers are re-ordered such that
$v_i\leq v_{i+1}$, $i\in\{1,\ldots,m-1\}$), algorithm \ref{alg:calculate_opt_price} would return the
value of optimal price. The time complexity of the algorithm is $O(m^2)$ where $m$ is the number of
advertisers\footnote{The time complexity of the algorithm may be further reduced to $O(m)$ by
computing and saving the value of $sum_i\triangleq \sum_{j\in\{i,\ldots,m\}} B_j$,
$i\in\{1,\ldots,m\}$ first, which can be finished in $O(m)$ steps. Then the inner ``for'' loop in
algorithm \ref{alg:calculate_opt_price} can be substituted by the stored value of $sum_i$ and the
complexity of the algorithm is reduced to $O(m)$. Here for simplicity of exposition, we just show the
$O(m^2)$ algorithm.}.

\renewcommand{\algorithmicrequire}{\textbf{Begin}}
\renewcommand{\algorithmicensure}{\textbf{End}}
\begin{algorithm}
\caption{Calculate Optimal Price $p^*(\mathcal{I})$}\label{alg:calculate_opt_price}
\begin{algorithmic}[1]
    \REQUIRE
    \STATE $v_0=0$;
    \STATE \textbf{for} $i=1:m$
    \STATE $\qquad$$sum=0$;
    \STATE $\qquad$\textbf{for} $j=i:m$
    \STATE $\qquad$$\qquad$$sum+=B_j$;
    \STATE $\qquad$\textbf{end for};
    \STATE $\qquad$$p=sum/S$;
    \STATE $\qquad$\textbf{if} $(p\leq v_i)$
    \STATE $\qquad$$\qquad$\textbf{return} $\max(p,v_{i-1})$;
    \STATE $\qquad$\textbf{end if};
    \STATE \textbf{end for};
    \STATE \textbf{return} $v_m$;
    \ENSURE
\end{algorithmic}
\end{algorithm}

After determining the optimal price $p^*$, the quantity of attentions allocated to each advertiser
$i$, denoted by $q_i$, can be easily computed. The search engine would first find \emph{the least
index} of advertiser whose value is larger than or equal to $p^*$, which is denoted by
$j\in\{i\in\mathcal{I}:v_i\geq p^*, v_{i-1}<p^*\}$ (we define $v_0=0$). If there are no undetermined
advertisers, which implies $v_j>p^*$, the quantity allocated to advertiser $i$ would be $q_i=B_i/p^*$
for $i\geq j$ and $q_i=0$ otherwise. If undetermined advertiser does exist, which implies $v_j=p^*$,
we have $q_i=B_i/p^*$ for $i>j$, $q_j=S-\sum_{i=j+1}^m q_i$, and $q_i=0$ for $i<j$. In both cases,
the demand equals the supply, i.e., $\sum_{i\in\mathcal I}q_i=S$. This can be summarized in algorithm
\ref{alg:calculate_alloc_of_supply}.

\renewcommand{\algorithmicrequire}{\textbf{Begin}}
\renewcommand{\algorithmicensure}{\textbf{End}}
\begin{algorithm}
\caption{Calculate Allocation $q_i$, $i\in\mathcal{I}$}\label{alg:calculate_alloc_of_supply}
\begin{algorithmic}[1]
    \REQUIRE
    \STATE $sum=0$;
    \STATE $p=p^*$;
    \STATE \textbf{for} $j=1:m$
    \STATE $\qquad$\textbf{if} $(v_j\geq p)$
    \STATE $\qquad$$\qquad$\textbf{break;}
    \STATE $\qquad$\textbf{end if};
    \STATE \textbf{end for};
    \STATE \textbf{for} $i=1:(j-1)$
    \STATE $\qquad$$q_i=0;$
    \STATE \textbf{end for};
    \STATE \textbf{for} $i=(j+1):m$
    \STATE $\qquad$$q_i=B_i/p$;
    \STATE $\qquad$$sum+=q_i$;
    \STATE \textbf{end for};
    \STATE $q_j=S-sum$;
    \ENSURE
\end{algorithmic}
\end{algorithm}

The revenue of search engine is:
\begin{equation}
    R=p^*\cdot S
\end{equation}

The aggregate utility of advertisers is:
\begin{equation}
    U_A=\sum_{i\in\mathcal{I}^+(p^*)}(v_i-p^*)\frac{B_i}{p^*}
\end{equation}
Notice that the indifferent advertiser, if exists, would always achieve zero utility since the
current price equals its value, thus we don't need to consider it in the expression.

The \emph{social welfare} of the advertising system\footnote{We don't consider search users' utility
in the expression here.} is:
\begin{equation}\label{eq:defn:social_welfare}
    SW=R+U_A
\end{equation}

\begin{lem}\label{lem:price_over_ad_set}
The optimal price is non-decreasing over the set of participating advertisers given fixed supply $S$.
That is to say, for any advertisers set $\mathcal{I}_1$ and $\mathcal{I}_2$, if
$\mathcal{I}_1\subseteq \mathcal{I}_2$, we have $p^*(\mathcal{I}_1)\leq p^*(\mathcal{I}_2)$, where
$p^*(\mathcal{I})$ is obtained according to algorithm \ref{alg:calculate_opt_price}.
\end{lem}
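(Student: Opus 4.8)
The plan is to reduce the lemma to a single monotonicity-of-sums observation, after first re-reading Algorithm~\ref{alg:calculate_opt_price} as computing a market-clearing price. For an advertiser set $\mathcal{J}\subseteq\mathcal{I}$ and a price $p\ge 0$, write
\[
  \Psi_{\mathcal{J}}(p)\;\triangleq\;\sum_{i\in\mathcal{J}:\ v_i>p}B_i ,
\]
which is exactly the quantity $p\cdot D(p)$ when the set of participating advertisers is $\mathcal{J}$. Two elementary properties carry the argument: $\Psi_{\mathcal{J}}(\cdot)$ is non-increasing in $p$ (because $\{i\in\mathcal{J}:v_i>p\}$ only shrinks as $p$ grows, as already noted in the text), and $\Psi_{\mathcal{J}}(p)$ is non-decreasing in $\mathcal{J}$ under inclusion, for every fixed $p$, since all $B_i\ge 0$ and enlarging $\mathcal{J}$ can only add non-negative terms.

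The first step is to establish the characterisation
\[
  p^*(\mathcal{J})\;=\;\min T_{\mathcal{J}},\qquad
  T_{\mathcal{J}}\;\triangleq\;\{\,p\ge 0:\ \Psi_{\mathcal{J}}(p)\le pS\,\}.
\]
I would first check that $T_{\mathcal{J}}$ is a closed half-line $[\pi,\infty)$: the map $p\mapsto\Psi_{\mathcal{J}}(p)-pS$ is strictly decreasing and right-continuous, is $\ge 0$ at $p=0$, and tends to $-\infty$, so the set on which it is $\le 0$ is exactly $[\pi,\infty)$ with $\pi$ the (attained) crossing point. Then I would match $\pi$ with the output of Algorithm~\ref{alg:calculate_opt_price}: the algorithm stops at the smallest index $i$ with $\big(\sum_{j\ge i}B_j\big)/S\le v_i$, and returns $\big(\sum_{j\ge i}B_j\big)/S$ when this crossing falls strictly inside $[v_{i-1},v_i)$, while it returns $v_{i-1}$ precisely in the ``indifferent advertiser'' situation of Figure~\ref{SE_Profit_over_Price_2_ExPost}, in which $\Psi_{\mathcal{J}}$ jumps from above the ray $pS$ to below it at $p=v_{i-1}$. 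This is in essence the correctness argument already sketched around Figures~\ref{SE_Profit_over_Price_ExPost}--\ref{SE_Profit_over_Price_2_ExPost}.

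Granting the characterisation, the lemma drops out immediately. Suppose $\mathcal{I}_1\subseteq\mathcal{I}_2$ and let $p\in T_{\mathcal{I}_2}$; then $\Psi_{\mathcal{I}_1}(p)\le\Psi_{\mathcal{I}_2}(p)\le pS$, so $p\in T_{\mathcal{I}_1}$. Hence $T_{\mathcal{I}_2}\subseteq T_{\mathcal{I}_1}$, and therefore $p^*(\mathcal{I}_1)=\min T_{\mathcal{I}_1}\le\min T_{\mathcal{I}_2}=p^*(\mathcal{I}_2)$.

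The only step requiring genuine care is the first one --- confirming that Algorithm~\ref{alg:calculate_opt_price} returns precisely $\min T_{\mathcal{J}}$ and is never off by one step of the staircase $\Psi_{\mathcal{J}}$. The delicate cases are ties among the $v_i$ (several budgets entering or leaving $\Psi_{\mathcal{J}}$ at a common price) and the boundary case where the crossing lands exactly on some $v_k$; both are absorbed by the strict inequality $v_i>p$ in the definition of $\mathcal{I}^+(p)$ together with the $\max(\cdot,v_{i-1})$ in the algorithm's return step. Everything after the characterisation is just ``a sum of non-negative terms cannot decrease when the index set grows,'' so I do not expect a further obstacle. If one simply adopts ``optimal price $=$ market-clearing price, i.e.\ $D(p)=S$'' as the working definition, as the text motivates, then the third paragraph is the whole proof.
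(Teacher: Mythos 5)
Your proof is correct, but it takes a genuinely different route from the paper. The paper argues by contradiction: it assumes $p^*(\mathcal{I}_1)>p^*(\mathcal{I}_2)$, invokes the exact market-clearing identity at each optimal price (including the fractional consumption $\alpha B_l/p$, $\beta B_{l'}/p$ of a possible indifferent advertiser), shows $\mathcal{I}_1^+(p_1)\cup\{l\}\subseteq\mathcal{I}_2^+(p_2)$, and derives the chain $S\leq\cdots<S$. You instead characterize the algorithm's output variationally, as $p^*(\mathcal{J})=\min\{p\geq 0:\Psi_{\mathcal{J}}(p)\leq pS\}$ with $\Psi_{\mathcal{J}}(p)=\sum_{i\in\mathcal{J}:v_i>p}B_i$, after which the lemma is a one-line consequence of $\Psi$ being monotone in the set under inclusion. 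Your characterization is sound (your case analysis of the return value $\max(p,v_{i-1})$, including ties and the jump landing exactly on some $v_k$, goes through; right-continuity and strict decrease of $\Psi(p)-pS$ do give an attained minimum), and it is the step carrying the real content --- but note the paper is in a symmetric position, since its ``supply equals demand at the optimum, with fractional indifferent advertiser'' identity is likewise only justified via Proposition \ref{DemandEqualSupply} and the figures, not proved from the algorithm. What each approach buys: the paper's contradiction argument works directly with the clearing identities and needs no separate correctness statement for Algorithm \ref{alg:calculate_opt_price}, at the cost of the somewhat delicate $\alpha,\beta$ bookkeeping repeated in Lemmas \ref{lem:price_over_ad_set}--\ref{lem:price_over_supply}; your threshold characterization is a reusable lemma from which the set-monotonicity here, the supply-monotonicity of Lemma \ref{lem:price_over_supply} (enlarging $S$ enlarges the feasible half-line), and the budget-monotonicity used inside Lemma \ref{prop:continuity_as_budget} all follow immediately by the same two-line inclusion argument.
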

\begin{proof}
We can prove the above lemma by contradiction. For simplicity of notation, we write $p_1\triangleq
p^*(\mathcal{I}_1)$ and $p_2\triangleq p^*(\mathcal{I}_2)$ and assume that $p_1>p_2$.

Since under optimal price, supply must be equal to demand, we have:
\begin{equation*}
    S=\sum_{i\in \mathcal{I}_1^+(p_1)}\frac{B_i}{p_1}+\frac{\alpha B_l}{p_1} \quad\mbox{and}\quad S=\sum_{i\in
    \mathcal{I}_2^+(p_2)}\frac{B_i}{p_2}+\frac{\beta B_{l'}}{p_2}
\end{equation*}
where $\alpha \in[0,1]$, and $\alpha>0$ if and only if there exists an indifferent advertiser $l$
whose value $v_l$ equals $p_1$; Similarly, $\beta\in[0,1]$, and $\beta>0$ if and only if there exists
an advertiser $l'$ such that $v_{l'}=p_2$.

For any advertiser $i\in \mathcal{I}_1^+(p_1)$, we have $i\in \mathcal{I}_1\subseteq \mathcal{I}_2$
and $v_i> p_1> p_2$, thus $i\in \mathcal{I}_2^+(p_2)$, which infers that
$\mathcal{I}_1^+(p_1)\subseteq \mathcal{I}_2^+(p_2)$; since $v_l=p_1>p_2$, we also have
$l\in\mathcal{I}_2^+(p_2)$. Therefore,
\begin{equation*}
    S=\sum_{i\in \mathcal{I}_1^+(p_1)}\frac{B_i}{p_1}+\frac{\alpha B_l}{p_1} \leq \sum_{i\in \mathcal{I}_1^+(p_1)}\frac{B_i}{p_1}+\frac{B_l}{p_1}\leq \sum_{i\in
    \mathcal{I}_2^+(p_2)}\frac{B_i}{p_1}<\sum_{i\in
    \mathcal{I}_2^+(p_2)}\frac{B_i}{p_2}\leq \sum_{i\in
    \mathcal{I}_2^+(p_2)}\frac{B_i}{p_2}+\frac{\beta B_{l'}}{p_2}
\end{equation*}
Contradiction to the conclusion that supply should equal demand under optimal price $p_2$.
\end{proof}

\begin{lem}\label{lem:revenue_over_ad_set}
The revenue of search engine is non-decreasing over the set of participating advertisers given fixed
supply $S$. That is to say, for any advertisers set $\mathcal{I}_1$ and $\mathcal{I}_2$, if
$\mathcal{I}_1\subseteq \mathcal{I}_2$, we have $R(\mathcal{I}_1)\leq R(\mathcal{I}_2)$.
\end{lem}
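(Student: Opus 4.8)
The plan is to obtain this lemma as an essentially immediate corollary of Lemma \ref{lem:price_over_ad_set}. The bridge is the closed form of the revenue: once Algorithm \ref{alg:calculate_opt_price} has produced the optimal price $p^*(\mathcal{I})$, the allocation of Algorithm \ref{alg:calculate_alloc_of_supply} exhausts the supply, i.e.\ $\sum_{i\in\mathcal{I}}q_i=S$, so $R(\mathcal{I})=p^*(\mathcal{I})\cdot S$. Since the supply $S$ is fixed and strictly positive, the map $x\mapsto xS$ is strictly increasing, hence $R(\cdot)$ is just a monotone rescaling of $p^*(\cdot)$ viewed as a function of the advertiser set. The proof then reads: for $\mathcal{I}_1\subseteq\mathcal{I}_2$, Lemma \ref{lem:price_over_ad_set} gives $p^*(\mathcal{I}_1)\le p^*(\mathcal{I}_2)$, and multiplying by $S>0$ yields
\begin{equation*}
    R(\mathcal{I}_1)=p^*(\mathcal{I}_1)\cdot S\le p^*(\mathcal{I}_2)\cdot S=R(\mathcal{I}_2).
\end{equation*}

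There is no real obstacle here; the substantive work was already carried out in Lemma \ref{lem:price_over_ad_set}. The only point that deserves a line of care is checking that the identity $R=p^*\cdot S$ holds uniformly across the cases distinguished in the ex post discussion — the case with only fully determined advertisers and the case with a single indifferent advertiser — which it does, because in both the allocation meets the supply exactly. One should also note that the degenerate configurations (for instance $\mathcal{I}_1=\emptyset$, or any set whose aggregate demand never reaches $S$, where Algorithm \ref{alg:calculate_opt_price} returns $v_m$) are consistent with the claimed monotonicity, since $p^*$ simply attains an extreme admissible value there.

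Should a fully self-contained derivation that does not invoke Lemma \ref{lem:price_over_ad_set} be preferred, I would instead argue pointwise in the price. Writing $D_{\mathcal{I}}(p)$ for the aggregate demand when the participating set is $\mathcal{I}$, the inclusion $\mathcal{I}_1\subseteq\mathcal{I}_2$ forces $\mathcal{I}_1^+(p)\subseteq\mathcal{I}_2^+(p)$ for every $p>0$, and since $D_{\mathcal{I}}(p)=\sum_{i\in\mathcal{I}^+(p)}B_i/p$ is a sum of nonnegative terms over that set, $D_{\mathcal{I}_1}(p)\le D_{\mathcal{I}_2}(p)$ for all $p$. Therefore $\min\bigl(pS,\,pD_{\mathcal{I}_1}(p)\bigr)\le\min\bigl(pS,\,pD_{\mathcal{I}_2}(p)\bigr)$ for all $p$, and taking the supremum over $p$ on both sides gives $R(\mathcal{I}_1)\le R(\mathcal{I}_2)$. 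The only thing to verify along this route is that the supremum of $\min(pS,pD_{\mathcal{I}}(p))$ is indeed the revenue returned by the algorithm, which is precisely the content of Proposition \ref{DemandEqualSupply} and the surrounding ex post analysis.
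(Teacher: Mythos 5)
Your primary argument is exactly the paper's proof: invoke Lemma \ref{lem:price_over_ad_set} and multiply the resulting price inequality by the fixed supply $S$, using $R(\mathcal{I})=p^*(\mathcal{I})\cdot S$. This is correct, and the alternative pointwise-in-$p$ derivation you sketch is a sound (if unnecessary) self-contained variant, so nothing needs to change.
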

\begin{proof}
This conclusion can be deducted from lemma \ref{lem:price_over_ad_set} immediately:
\begin{equation*}
    R(\mathcal{I}_1)=p^*(\mathcal{I}_1)\cdot S \leq p^*(\mathcal{I}_2)\cdot S \leq R(\mathcal{I}_2).
\end{equation*}
\end{proof}

\begin{lem}\label{lem:price_over_supply}
The optimal price is non-increasing over the supply given the set of participating advertisers
$\mathcal{I}$. That is to say, for any supply $S_1, S_2\in[0,\infty)$, if $S_1>S_2$, we have
$p^*(S_1)\leq p^*(S_2)$.
\end{lem}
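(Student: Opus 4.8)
The plan is to argue by contradiction, closely following the proof of Lemma~\ref{lem:price_over_ad_set} and again exploiting the fact --- the ex-post analogue of Proposition~\ref{DemandEqualSupply}, implicit in the discussion of Algorithms~\ref{alg:calculate_opt_price}--\ref{alg:calculate_alloc_of_supply} --- that under an optimal price the demand equals the supply. Write $p_1\triangleq p^*(S_1)$ and $p_2\triangleq p^*(S_2)$, and suppose for contradiction that $S_1>S_2$ yet $p_1>p_2$. The structural observation driving the argument is that raising the price can only shrink the active set of advertisers: since $p_1>p_2$, any $i$ with $v_i>p_1$ also has $v_i>p_2$, so $\mathcal{I}^+(p_1)\subseteq\mathcal{I}^+(p_2)$; moreover, if $p_1$ admits an indifferent advertiser $l$ (with $v_l=p_1$, contributing fractional demand $\alpha B_l/p_1$, $\alpha\in[0,1]$), then $v_l=p_1>p_2$ forces $l\in\mathcal{I}^+(p_2)$ while $l\notin\mathcal{I}^+(p_1)$.

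Next I would write the demand-equals-supply identities
\begin{equation*}
    S_1=\sum_{i\in\mathcal{I}^+(p_1)}\frac{B_i}{p_1}+\frac{\alpha B_l}{p_1},\qquad S_2=\sum_{i\in\mathcal{I}^+(p_2)}\frac{B_i}{p_2}+\frac{\beta B_{l'}}{p_2},
\end{equation*}
with $\alpha,\beta\in[0,1]$ and the $\beta$-term present iff some $l'$ has $v_{l'}=p_2$ (when no such indifferent advertiser exists, read the corresponding fractional term as $0$ and the singleton as empty), and then chain
\begin{equation*}
    S_1=\sum_{i\in\mathcal{I}^+(p_1)}\frac{B_i}{p_1}+\frac{\alpha B_l}{p_1}\leq\sum_{i\in\mathcal{I}^+(p_1)\cup\{l\}}\frac{B_i}{p_1}\leq\sum_{i\in\mathcal{I}^+(p_2)}\frac{B_i}{p_1}<\sum_{i\in\mathcal{I}^+(p_2)}\frac{B_i}{p_2}\leq S_2,
\end{equation*}
where the first inequality uses $\alpha\leq1$ together with $l\notin\mathcal{I}^+(p_1)$, the second uses $\mathcal{I}^+(p_1)\cup\{l\}\subseteq\mathcal{I}^+(p_2)$ and $B_i\geq0$, the strict inequality uses $p_1>p_2$ together with $\sum_{i\in\mathcal{I}^+(p_2)}B_i>0$, and the last uses $\beta B_{l'}\geq0$. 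This yields $S_1<S_2$, contradicting $S_1>S_2$, so the lemma follows.

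The only delicate point, and the main (minor) obstacle, is the degenerate situation $\sum_{i\in\mathcal{I}^+(p_2)}B_i=0$, in which the strict inequality is no longer available; there the same chain collapses to $S_1\leq0$, hence $S_1=0$, which still contradicts $S_1>S_2\geq0$. The fully degenerate cases ($S=0$ or $\sum_i B_i=0$) make the revenue identically zero and every price vacuously optimal, so they can be excluded at the outset. As an alternative to the contradiction format, one may view $p^*(S)$ as the (essentially unique) crossing point of the strictly increasing line $p\mapsto pS$ with the non-increasing step function $p\mapsto\sum_{i:v_i>p}B_i$: increasing $S$ tilts the line upward and pushes the crossing point weakly leftward, which is precisely the asserted monotonicity; the contradiction argument above is just the bookkeeping that makes this picture rigorous in the presence of the indifferent advertiser.
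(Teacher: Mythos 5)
Your proposal is correct and follows essentially the same argument as the paper: the same contradiction setup, the same demand-equals-supply identities with the fractional terms for indifferent advertisers, and the same chain of inequalities forcing $S_1<S_2$. Your extra handling of the degenerate case $\sum_{i\in\mathcal{I}^+(p_2)}B_i=0$ is a small refinement the paper glosses over, but it does not change the nature of the proof.
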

\begin{proof}
We prove the lemma by contradiction. For simplicity, we write $p_1\triangleq p^*(S_1)$ and
$p_2\triangleq p^*(S_2)$ and assume that $p_1>p_2$.

Since under optimal price, supply equals demand, we get:
\begin{equation*}
    S_1=\sum_{i\in \mathcal{I}^+(p_1)}\frac{B_i}{p_1}+\frac{\alpha B_l}{p_1} \quad\mbox{and}\quad S_2=\sum_{i\in
    \mathcal{I}^+(p_2)}\frac{B_i}{p_2}+\frac{\beta B_{l'}}{p_2}
\end{equation*}
where $\alpha \in[0,1]$, and $\alpha>0$ if and only if there exists an indifferent advertiser $l$
whose value $v_l$ equals $p_1$; Similarly, $\beta\in[0,1]$, and $\beta>0$ if and only if there exists
an advertiser $l'$ such that $v_{l'}=p_2$.

For any advertiser $i\in \mathcal{I}^+(p_1)$, we have $v_i> p_1> p_2$, thus $i\in
\mathcal{I}^+(p_2)$, which infers that $\mathcal{I}^+(p_1)\subseteq \mathcal{I}^+(p_2)$; since
$v_l=p_1>p_2$, we also have $l\in\mathcal{I}^+(p_2)$. Therefore,
\begin{equation*}
    S_1=\sum_{i\in \mathcal{I}^+(p_1)}\frac{B_i}{p_1}+\frac{\alpha B_l}{p_1} \leq \sum_{i\in \mathcal{I}^+(p_1)}\frac{B_i}{p_1}+\frac{B_l}{p_1}\leq \sum_{i\in
    \mathcal{I}^+(p_2)}\frac{B_i}{p_1}<\sum_{i\in
    \mathcal{I}^+(p_2)}\frac{B_i}{p_2}\leq \sum_{i\in
    \mathcal{I}^+(p_2)}\frac{B_i}{p_2}+\frac{\beta B_{l'}}{p_2}=S_2
\end{equation*}
Contradiction to our assumption of $S_1>S_2$.
\end{proof}

\begin{lem}
The revenue of search engine is non-decreasing over the supply given the set of participating
advertisers $\mathcal{I}$. That is to say, for any supply $S_1, S_2\in[0,\infty)$, if $S_1>S_2$, we
have $R(S_1)\geq R(S_2)$.
\end{lem}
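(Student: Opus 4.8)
The plan is to avoid combining Lemma~\ref{lem:price_over_supply} with the identity $R(S)=p^*(S)\cdot S$ directly, because that would pit a larger supply against a weakly smaller optimal price and the product of the two carries no information. Instead I would use that the monopoly revenue is by definition an \emph{optimized} quantity, $R(S)=\max_{p\ge 0}\,p\cdot\min\!\bigl(S,D(p)\bigr)$, and that for a \emph{fixed} price $p$ the demand $D(p)$ does not depend on $S$. At the finitely many price levels $p=v_i$ where demand is not single-valued one reads $D(p)$ as the maximal willing demand $\bar D(p):=\frac1p\sum_{i:\,v_i\ge p}B_i$, exactly the ``indifferent advertiser'' convention already adopted in the ex post discussion.

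The key observation is then elementary: for each fixed admissible price $p$, the per-price revenue $R(p,S):=p\cdot\min(S,D(p))$ is non-decreasing in $S$ — it equals $pS$ while $S\le D(p)$ and equals the constant $pD(p)$ once $S\ge D(p)$, and the two branches agree at $S=D(p)$. Consequently, for $S_1>S_2$ we get $R(p,S_1)\ge R(p,S_2)$ for \emph{every} $p$ simultaneously, and taking the supremum over $p$ on both sides preserves the inequality, so $R(S_1)=\sup_p R(p,S_1)\ge\sup_p R(p,S_2)=R(S_2)$. The one link that must be spelled out is that the price $p^*(S)$ returned by Algorithm~\ref{alg:calculate_opt_price} actually attains this supremum; this is precisely Proposition~\ref{DemandEqualSupply}, which says the revenue is maximized at the market-clearing price, whence $R(S)=p^*(S)\cdot S=p^*(S)\cdot\min\!\bigl(S,D(p^*(S))\bigr)=\max_p R(p,S)$.

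If one prefers to mirror the contradiction style of the earlier lemmas, the same idea can be phrased as a ``revealed preference'' argument: assume $R(S_1)<R(S_2)$ with $S_1>S_2$; with supply $S_1$ the search engine is still free to post the price $p_2:=p^*(S_2)$, and since $p_2$ clears the market at supply $S_2$ its willing demand satisfies $\bar D(p_2)\ge S_2$, so posting $p_2$ at supply $S_1$ earns at least $p_2\cdot\min(S_1,S_2)=p_2S_2=R(S_2)$ because $S_1>S_2$; hence $R(S_1)\ge R(S_2)$, a contradiction. A third, more hands-on route is to feed Lemma~\ref{lem:price_over_supply} into the monotonicity of $p\,D(p)=\sum_{i\in\mathcal I^+(p)}B_i$ in $p$: from $p^*(S_1)\le p^*(S_2)$ the active set at $S_1$ contains the one at $S_2$ — and when the two optimal prices are unequal it also contains the advertiser left indifferent at $S_2$ — after which a short split on whether $p^*(S_1)=p^*(S_2)$ disposes of the partially served advertiser.

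I do not anticipate a genuine obstacle here; the only delicate point is the bookkeeping at the prices $p=v_i$, where demand jumps and one advertiser may be partially served. This is handled uniformly by working with $\bar D(p)$ in place of $D(p)$ throughout, which changes nothing in the argument since $\bar D(p)$ is again independent of $S$ and still non-increasing in $p$.
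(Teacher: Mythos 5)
Your proposal is correct, and your primary argument is genuinely different from the paper's. The paper proves this lemma in the same contradiction/chain-of-inequalities style as the preceding lemmas: it invokes Lemma~\ref{lem:price_over_supply} to get $p^*(S_1)\leq p^*(S_2)$, writes each revenue in the budget-sum form $R(S)=p^*(S)\cdot S=\sum_{i\in\mathcal I^+(p^*)}B_i+\alpha B_l$ (the last term being the partially spent budget of an indifferent advertiser), notes that the active set at the lower price $p^*(S_1)$ contains the active set at $p^*(S_2)$ together with the advertiser indifferent at $p^*(S_2)$, and chains the resulting inequalities — essentially your ``third, more hands-on route.'' Your main route instead treats $R(S)$ as the optimized value $\max_{p}\,p\cdot\min(S,\bar D(p))$, observes that each per-price revenue is non-decreasing in $S$ because $\bar D(p)$ does not depend on $S$, and passes the inequality through the supremum (equivalently, the revealed-preference argument that at supply $S_1$ the engine may still post $p^*(S_2)$ and earn at least $p^*(S_2)S_2$); the only link needed is that the market-clearing price from Algorithm~\ref{alg:calculate_opt_price} attains the maximum, which you correctly pin on Proposition~\ref{DemandEqualSupply}. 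Your approach buys brevity and robustness — it needs neither Lemma~\ref{lem:price_over_supply} nor any case analysis over indifferent advertisers beyond adopting the convention $\bar D$ once, and it would survive any change in the demand model so long as demand is independent of supply — while the paper's approach stays within the explicit budget-sum machinery it has already set up, which also makes visible exactly where the extra revenue comes from (the budgets of advertisers newly active at the lower price). Your opening caution is also well taken: combining $S_1>S_2$ with $p^*(S_1)\leq p^*(S_2)$ alone says nothing about the product, and indeed the paper does not argue that way.
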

\begin{proof}
For simplicity, we write $p_1\triangleq p^*(S_1)$ and $p_2\triangleq p^*(S_2)$ and from lemma
\ref{lem:price_over_supply} we know that $p_1<p_2$.

Since under optimal price, supply equals demand, we get:
\begin{equation*}
    S_1=\sum_{i\in \mathcal{I}^+(p_1)}\frac{B_i}{p_1}+\frac{\alpha B_l}{p_1} \quad\mbox{and}\quad S_2=\sum_{i\in
    \mathcal{I}^+(p_2)}\frac{B_i}{p_2}+\frac{\beta B_{l'}}{p_2}
\end{equation*}
where $\alpha \in[0,1]$, and $\alpha>0$ if and only if there exists an indifferent advertiser $l$
whose value $v_l$ equals $p_1$; Similarly, $\beta\in[0,1]$, and $\beta>0$ if and only if there exists
an advertiser $l'$ such that $v_{l'}=p_2$.

For any advertiser $i\in \mathcal{I}^+(p_2)$, we have $v_i> p_2> p_1$, thus $i\in
\mathcal{I}^+(p_1)$, which infers that $\mathcal{I}^+(p_2)\subseteq \mathcal{I}^+(p_1)$; since
$v_{l'}=p_2>p_1$, we also have $l'\in\mathcal{I}^+(p_1)$. Therefore,
\begin{equation*}
    R(S_2)=p_2\cdot S_2=\sum_{i\in \mathcal{I}^+(p_2)}B_i+\beta B_{l'}\leq \sum_{i\in
    \mathcal{I}^+(p_2)}B_i+B_{l'}\leq \sum_{i\in \mathcal{I}^+(p_1)}B_i \leq \sum_{i\in
    \mathcal{I}^+(p_1)}B_i+\alpha B_l=R(S_1)
\end{equation*}
\end{proof}

\subsection{Formulated As An Optimization Problem}
The revenue maximization problem confronting the monopolistic search engine could also be interpreted
as an optimization problem as follows:
\begin{eqnarray}
  \mbox{maximize} && p\cdot\sum_{i} q_i \label{eqn:max_mono}\\
  \mbox{subject to} && \forall i\in\mathcal{I}: p\cdot q_i\leq B_i \label{eqn:mono_cons_1}\\
   && \forall i\in\mathcal{I}: (v_i-p)\cdot q_i \geq 0 \label{eqn:mono_cons_2}\\
   && \sum_{i} q_i\leq S \label{eqn:mono_cons_3}\\
   && \forall i\in\mathcal{I}: p,q_i\geq 0 \label{eqn:mono_cons_4}
\end{eqnarray}
where the search engine needs to determine its optimal price $p$ and allocation of supply $q_i$ to
each advertiser $i$ in objective function \ref{eqn:max_mono}. Constraint (\ref{eqn:mono_cons_1})
means that each advertiser could not spend more than its budget. Constraint (\ref{eqn:mono_cons_2})
shows that the utility of advertiser must be non-negative, i.e., when the price $p$ exceeds the value
$v_i$, which implies that $q_i$ must be zero, advertiser $i$ would just quit the ad campaign. The
total supply to all advertisers is limited by $S$, which is shown in constraint
(\ref{eqn:mono_cons_3}). The last constraint states that all variables ($p$ and $q_i$, $i\in\mathcal
I$) should be non-negative.

As we have shown in proposition \ref{DemandEqualSupply} that the maximal revenue can only be obtained
when supply equals demand, therefore the above formulation may be further reduced to:
\begin{eqnarray}
  \mbox{maximize} && p\cdot S \label{eqn:max_mono_simplified}\\
  \mbox{subject to} && \forall i\in\mathcal{I}: p\cdot q_i\leq B_i \label{eqn:mono_cons_1_simplified}\\
   && \forall i\in\mathcal{I}: (v_i-p)\cdot q_i \geq 0 \label{eqn:mono_cons_2_simplified}\\
   && \sum_{i} q_i= S \label{eqn:mono_cons_3_simplified}\\
   && \forall i\in\mathcal{I}: p,q_i\geq 0 \label{eqn:mono_cons_4_simplified}
\end{eqnarray}
where constraint (\ref{eqn:mono_cons_3_simplified}) becomes tight and the objective function is
simplified to maximize variable $p$ only. Thus it's easy to see that the optimal solution for $p$ is
unique. Otherwise, if both $p^*$ and ${p^{*}}'$ maximize the objective function, it must be $p^*\cdot
S={p^{*}}'\cdot S$, so $p^*={p^{*}}'$. It remains to be inspected whether the optimal allocation
vector of $\vec{q}\triangleq (q_1,q_2,\ldots,q_m)$ for the optimization problem is unique too. We
summarize our conclusions in the following proposition.

\begin{prop}\label{prop:multi_alloc_vector}
In general, for revenue maximization problem (\ref{eqn:max_mono_simplified}), the optimal price $p$
is unique, however, there may be multiple optimal solutions for allocation vector $\vec{p}$.
\end{prop}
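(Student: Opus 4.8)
The plan is to handle the two halves of the statement separately, since they are of rather different flavor. For the \emph{uniqueness of the optimal price}, I would simply make explicit the argument already sketched just before the proposition: a price $p$ is feasible for the reduced program (\ref{eqn:max_mono_simplified})--(\ref{eqn:mono_cons_4_simplified}) if and only if $\sum_{i:v_i\ge p}B_i\ge p\,S$, and since the left side is non-increasing in $p$ while the right side is increasing, the feasible prices form an interval $[0,p^*]$; the objective $p\,S$ is strictly increasing, so $p=p^*$ (the value returned by algorithm \ref{alg:calculate_opt_price}) is the unique maximizer. This takes only a couple of lines.

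The substance is the \emph{non-uniqueness of} $\vec q$, which I would establish by exhibiting a concrete instance rather than arguing in the abstract. Take $m=2$, $S=1$, advertiser $1$ with $(v_1,B_1)=(1,1)$ and advertiser $2$ with $(v_2,B_2)=(2,1)$. Algorithm \ref{alg:calculate_opt_price} returns $p^*=1$: at $i=1$ the candidate price is $(B_1+B_2)/S=2>v_1$, and at $i=2$ the candidate $B_2/S=1\le v_2$, so the algorithm returns $\max(1,v_1)=1$. At $p^*=1$ advertiser $1$ is indifferent ($v_1=p^*$) while $v_2>p^*$, and constraints (\ref{eqn:mono_cons_1_simplified})--(\ref{eqn:mono_cons_4_simplified}) collapse to $q_1,q_2\in[0,1]$ with $q_1+q_2=1$. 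Hence every vector $\vec q=(t,1-t)$ with $t\in[0,1]$ is feasible and yields the optimal revenue $p^*S=1$, so the set of optimal allocations is a nondegenerate segment, not a singleton.

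I would then add a short remark explaining why the phrasing ``in general'' and ``there may be'' is the right one. Whenever there is an advertiser whose value equals $p^*$ together with slack in the total eligible budget, i.e. $\sum_{i:v_i\ge p^*}B_i>p^*S$, the supply $S$ can be redistributed between the indifferent advertiser(s) and the advertisers valued strictly above $p^*$, producing a continuum of optimal $\vec q$; conversely, if $\sum_{i:v_i\ge p^*}B_i=p^*S$ — in particular whenever no advertiser's value equals $p^*$ — every participating advertiser is forced to exhaust its budget and $\vec q$ is unique. This dichotomy incidentally also pins down exactly when the allocation produced by algorithm \ref{alg:calculate_alloc_of_supply} is the only optimum.

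The proof is short, so there is no serious obstacle; the one place requiring care — the ``main obstacle'', such as it is — is the bookkeeping at $p=p^*$ in the constructed instance: one must check that $p^*=1$ really maximizes the reduced program (so the non-uniqueness is genuinely among \emph{optimal} solutions and not an artifact of a suboptimal price), and that each $(t,1-t)$ indeed satisfies $p^*q_i\le B_i$ and $(v_i-p^*)q_i\ge 0$. Both are immediate here (the sign constraint is trivial since $v_1-p^*=0$ and $v_2>p^*$, and $q_i\le 1=B_i/p^*$), which is why a two-advertiser example suffices.
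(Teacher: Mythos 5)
Your proposal is correct and follows essentially the same route as the paper: the price uniqueness is the same one-line observation that the objective $p\cdot S$ pins down $p$, and the multiplicity of $\vec q$ is shown by a two-advertiser example with an indifferent advertiser at $p^*$ (the paper uses $v_1=1,v_2=2$, $B_1=2,B_2=1$, $S=2$, yielding $(2-q,q)$, $q\in[0,1]$ — the same construction as yours up to the choice of numbers). Your added remark characterizing exactly when the optimal allocation is non-unique (slack in $\sum_{i:v_i\ge p^*}B_i>p^*S$ versus equality) goes slightly beyond the paper but is consistent with it.
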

This can be shown by constructing a simple example as follows: there are two advertisers with
$v_1=1,v_2=2$ and $B_1=2,B_2=1$, search engine's supply is $S=2$. Assuming the optimal price $p^*$ is
larger than $1$, $q_1$ must be zero since $v_1<p^*$, then $q_2$ must be $2$ according to constraint
(\ref{eqn:mono_cons_3_simplified}). This would lead to $p\cdot q_2>2$, which contradicts with the
budget constraint of $B_2=1$. Now assuming the optimal price $p^*=1$, the constraints of the
optimization problem would be reduced to $q_1+q_2=2$ and $q_2\leq 1$, thus the optimal allocation
vector could be $\vec{q^*}=(q_1^*,q_2^*)=(2-q,q)$, $\forall q\in[0,1]$.

We now turn to investigate the effect of different optimal solutions to the social welfare. Since in
general $R=p\cdot \sum_{i\in \mathcal I} q_i$ and $U_A=\sum_{i\in \mathcal I} (v_i-p)q_i$, from
equation (\ref{eq:defn:social_welfare}) we get $SW=\sum_{i\in \mathcal I} v_i\cdot q_i$, where the
payments between search engine and advertiser are crossed off. If we examine the original
\emph{social welfare maximization} (SWM) problem under the same constraints
(\ref{eqn:mono_cons_1_simplified})-(\ref{eqn:mono_cons_4_simplified}), we can immediately present a
trivial solution as follows: $p=0$, $q_m=S$, $q_i=0$ for $i\in\mathcal{I}\backslash\{m\}$, i.e.,
letting the advertiser with the highest value acquire all the supply exclusively. Now the maximal
social welfare would be $SW_{max}=v_m\cdot S$. However, this solution is infeasible since it induces
zero profit to the search engine. An alternative problem the search engine may be interested in is to
maximize the social welfare while maintaining the optimal revenue it has achieved in
(\ref{eqn:max_mono_simplified}). In other words, we need to pick out one among the multiple optimal
allocations of (\ref{eqn:max_mono_simplified}) to maximize the social welfare. We call it as the
\emph{constrained social welfare maximization} (C-SWM) problem henceforth. This following theorem
gives the solution to the C-SWM problem.

\begin{theo}
Among all optimal solutions to the profit maximization problem (\ref{eqn:max_mono_simplified}),
algorithm \ref{alg:calculate_opt_price} and \ref{alg:calculate_alloc_of_supply} yield the one which
maximizes the social welfare, i.e., the solution to the constrained social welfare maximization
(C-SWM) problem.
\end{theo}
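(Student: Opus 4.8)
The plan is to identify the set of optimal solutions of the profit maximization problem (\ref{eqn:max_mono_simplified}) with a simple polytope in the allocation variables, to recognize the C-SWM problem as a continuous (fractional-knapsack type) linear program over that polytope, and then to show that Algorithms \ref{alg:calculate_opt_price} and \ref{alg:calculate_alloc_of_supply} produce a ``fill the highest values first'' greedy point, which such a linear program always maximizes. First I would fix the price: by Proposition \ref{prop:multi_alloc_vector} the optimal price $p^*$ is unique, so every optimal solution of (\ref{eqn:max_mono_simplified}) has the form $(p^*,\vec q)$ with $\vec q$ ranging over
\[
\mathcal P=\Bigl\{\vec q\ge 0:\ q_i=0\ \text{when}\ v_i<p^*;\ \ q_i\le B_i/p^*\ \ \forall i;\ \ \textstyle\sum_{i}q_i=S\Bigr\},
\]
the constraint set obtained from (\ref{eqn:mono_cons_1_simplified})--(\ref{eqn:mono_cons_4_simplified}) after substituting $p=p^*$: (\ref{eqn:mono_cons_2_simplified}) forces $q_i=0$ exactly when $v_i<p^*$, and Proposition \ref{DemandEqualSupply} makes (\ref{eqn:mono_cons_3_simplified}) tight. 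Since the payments cancel we already observed $SW=\sum_i v_iq_i$, so C-SWM is precisely $\max_{\vec q\in\mathcal P}\sum_i v_iq_i$.

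Next I would characterize the maximizers of this linear program by a one-line exchange argument. If $\vec q\in\mathcal P$ admits a pair of indices with $v_a>v_b$, $q_a<B_a/p^*$ and $q_b>0$, then transferring a small amount $\varepsilon>0$ of allocation from $b$ to $a$ keeps $\vec q$ in $\mathcal P$ and increases the objective by $(v_a-v_b)\varepsilon>0$; hence $\vec q$ is not optimal. Consequently, reindexing so that $v_1\le\cdots\le v_m$, every maximizer is ``greedy from the top'': there is a pivot index $t$ with $v_t\ge p^*$, $q_i=B_i/p^*$ for $i>t$, $q_i=0$ for $i<t$, and $q_t=S-\sum_{i>t}B_i/p^*\in[0,B_t/p^*]$. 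Conversely any two greedy points differ only by a reshuffle of allocation among advertisers of equal value and thus attain the same objective, so all greedy points are optimal and share one common value of $SW$.

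Finally I would verify that the output $\vec q^A$ of Algorithm \ref{alg:calculate_alloc_of_supply} is exactly such a greedy point, with pivot equal to the index $j=\min\{i:v_i\ge p^*\}$ the algorithm uses: $q^A_i=0$ for $i<j$, $q^A_i=B_i/p^*$ for $i>j$, and $q^A_j=S-\sum_{i>j}B_i/p^*$. It remains to check $q^A_j\in[0,B_j/p^*]$, i.e.\ $\sum_{i>j}B_i\le p^*S\le\sum_{i\ge j}B_i$. This follows from how Algorithm \ref{alg:calculate_opt_price} selects $p^*$ together with the demand-equals-supply property (Proposition \ref{DemandEqualSupply}): in the case $v_j>p^*$ (no indifferent advertiser) demand equals supply gives $\sum_{i\ge j}B_i=p^*S$, so $q^A_j=B_j/p^*$; in the case $v_j=p^*$ (an indifferent advertiser exists) the termination test of Algorithm \ref{alg:calculate_opt_price} passing at index $j{+}1$ and failing at index $j$ yields $\sum_{i>j}B_i\le p^*S<\sum_{i\ge j}B_i$. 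Hence $\vec q^A\in\mathcal P$ is a greedy point, and by the previous paragraph it maximizes $SW$ among all optimal solutions of (\ref{eqn:max_mono_simplified}).

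The exchange step is routine; the real work is this last bookkeeping — pinning down the pivot and the inequalities $\sum_{i>j}B_i\le p^*S\le\sum_{i\ge j}B_i$ from the internals of Algorithm \ref{alg:calculate_opt_price} in both termination branches — together with the treatment of ties among the $v_i$. When several advertisers share the pivot value the C-SWM optimum is not unique, but all optima, including the particular index-based tie-break made by Algorithm \ref{alg:calculate_alloc_of_supply}, have the same social welfare, so the conclusion is unaffected; assuming the values $v_i$ are distinct removes this wrinkle entirely.
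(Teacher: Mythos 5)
Your proof is correct, and it reaches the conclusion by a somewhat different route than the paper. The paper's argument is a single direct comparison: it takes the algorithm's allocation $q^*$, writes an arbitrary feasible competitor as $\hat q_i=q_i^*-\delta_i$ for $i>j$ (with $\hat q_j$ pinned down by $\sum_i\hat q_i=S$), and computes $SW(\hat q)-SW(q^*)=\sum_{i>j}\delta_i(v_j-v_i)\le 0$ in one step --- essentially your exchange argument carried out ``all at once'' against the algorithm's point, with no need to characterize the set of maximizers of the LP. Your version first describes all optima of (\ref{eqn:max_mono_simplified}) as the polytope $\mathcal P$ at the unique price $p^*$, characterizes the C-SWM maximizers as top-down greedy points via pairwise exchanges, and then verifies that Algorithm \ref{alg:calculate_alloc_of_supply} outputs such a point by extracting $\sum_{i>j}B_i\le p^*S\le\sum_{i\ge j}B_i$ from the termination logic of Algorithm \ref{alg:calculate_opt_price}. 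This is longer, but it buys two things the paper leaves implicit: an explicit proof that the algorithm's allocation is itself feasible (i.e.\ $q_j\in[0,B_j/p^*]$, which the paper simply assumes), and an explicit treatment of ties among the $v_i$, where the C-SWM optimum is non-unique but all optima share the same welfare. Two small bookkeeping remarks: in the exchange step you should note that $q_b>0$ forces $v_b\ge p^*$, hence $v_a>p^*$, so the transfer stays in $\mathcal P$; and your ``fails at $j$, passes at $j+1$'' description of the indifferent case should also absorb the boundary subcases where the test passes at $j$ with equality ($p^*S=\sum_{i\ge j}B_i$) or where the loop falls through and returns $v_m$ --- in both the needed inequalities still hold, so nothing breaks.
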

\begin{proof}
Denote the optimal price and allocation induced by algorithms as $p^*$ and $q_1^*,\ldots,q_m^*$, and
assuming there is another allocation vector $\hat{q}_1,\ldots,\hat{q}_m$ satisfying the constraints
(\ref{eqn:mono_cons_1_simplified})-(\ref{eqn:mono_cons_4_simplified}). Assuming advertiser $j$ has
the cutting-off value such that $v_i\geq p^*$ for $i\geq j$ and $v_i < p^*$ for $i<j$. Therefore from
constraint (\ref{eqn:mono_cons_2_simplified}), we must have $q_i^*=\hat{q}_i=0$ for all $i<j$.
According to algorithm \ref{alg:calculate_alloc_of_supply}, $q_i^*=\frac{B_i}{p^*}$ for all $i>j$ and
$q_j^*=S-\sum_{i=j+1}^m q_i$. Due to the constraint (\ref{eqn:mono_cons_2_simplified}), for all
$i>j$, it must be $\hat{q}_i\leq \frac{B_i}{p^*}=q_i^*$. Hence we can assume that
$\hat{q}_i=q_i^*-\delta_i$, where $\delta_i\geq 0$ for all $i>j$. Then the social welfare under
either allocation is:
\begin{eqnarray*}
  SW_1 &=& \sum_{i\in\mathcal I}v_i\cdot q_i^* \\ &=&\sum_{i=j+1}^m v_i\cdot q_i^* + v_j\cdot (S-\sum_{i=j+1}^m q_i^*) \\
  SW_2 &=& \sum_{i\in\mathcal I}v_i\cdot \hat{q}_i \\&=& \sum_{i=j+1}^m v_i\cdot (q_i^*-\delta_i) + v_j\cdot [S-\sum_{i=j+1}^m
  (q_i^*-\delta_i)] \\ &=& SW_1+ \sum_{i=j+1}^m \delta_i\cdot (v_j-v_i)
\end{eqnarray*}
Since $\delta_i\geq 0$ and $v_j\leq v_i$ for $i>j$, we have $SW_2\leq SW_1$. Therefore
$q_1^*,\ldots,q_m^*$ maximizes the social welfare over all possible optimal allocations of
(\ref{eqn:max_mono_simplified}).
\end{proof}

\section{The Duopoly Market Model}\label{sec:4:duo_model}
In this section we switch from the monopoly model to the competitive model with more than one search
engine. Considering the likely situation where there is one leading search company and one major
competitor in the market (for example, Google and Yahoo! in the United States), we describe a duopoly
model where one search engine has an advantage over the other. We formulate their competition as a
three-stage dynamic game and solve it from the \emph{ex post} perspective as follows.

\subsection{Competition for End Users in Stage I}
In Stage I search engines would choose different strategies for attracting end users with different
tastes. The user bases they attract in this Stage would be the decisive factor for determining their
supply of user attentions to advertisers in subsequent stages.

We assume that there are two \emph{horizontally} and \emph{vertically} differentiated search engines
$\mathcal{J}=\{1,2\}$ providing search results to users and selling ad opportunity to advertisers.

Here \emph{horizontal difference} means the different design of their home pages and diversity of
extra services such as email, news and other applications. Different users may have different tastes
and preferences and hence be attracted by different search engines.

\emph{Vertical difference} means the quality of searching results. The higher the quality is, the
better users and advertisers would feel. We assume that search engine 1 possesses the leading
technology to match ads to search queries and can provide better service for both users and
advertisers than search engine 2.

In terms of horizontal difference, the canonical Hotelling's model of spatial competition
\cite{Hotelling} provides an appealing framework to address the equilibrium in characteristic space.
The behaviors of providers could then be rationalized as the best-response strategies of players in a
location game. The dynamics of the game can be described as follows: each provider chooses a location
in the characteristic space which denotes the specific feature of service it provides to users. And
each user is characterized by an address reflecting his individual preference of ideal features
search engines should provide. Searching at engine $j\in\mathcal J$ involves quadratic transportation
cost\footnote{Actually in the seminar paper of Hotelling \cite{Hotelling} the author assumed the
linear transportation cost, which resulted in no equilibrium results. Later literatures on
Hotelling's model usually modified this assumption to the quadratic transportation cost which ensures
existence of equilibrium. Here we followed this line of revised model as applied in recent papers
such as \cite{Location, circle_model}. Interesting readers may further refer to the excellent survey
of \cite{Diff_survey} for a comprehensive discussion and review of different variants of the
Hotelling's model.} for a user if engine $j$ is not located in his ideal position. Users would choose
search engine which provides better search results and also induces as low transportation cost as
possible.

Assuming that a continuum of users are spread uniformly with unit density on the circumference of a
unit circle. The address of user is denoted by $t\in[0,1)$. Without loss of generality, let search
engine 1 locate at $x_1=0$\footnote{Since it is a circle, engine 1 can also be regarded as locating
at the ending point $x_1=1$.} and search engine 2 $x_2\in[0,1)$, as shown in figure
\ref{fig:Circular}.

\begin{figure}[ht]
  \centering
  \includegraphics[scale=1]{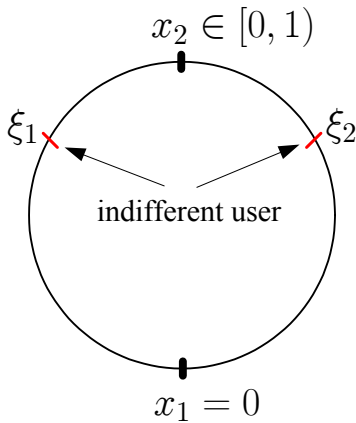}\\
  \caption{Users in Circular Domain}\label{fig:Circular}
\end{figure}

Assuming search engine 1 can provide higher quality results for users than search engine 2. Then the
utility of the user searching in either engine would be as follows:
  \begin{eqnarray}
    u_1(t) &=& \zeta_1 q-C(t,x_1)=q-\min\{t^2,(1-t)^2\} \\
    u_2(t) &=& \zeta_2 q-C(t,x_2)=\zeta q-(t-x_2)^2
  \end{eqnarray}
where $\zeta\in[0,1]$ denotes the comparative ``disability'' of search engine 2 to provide the best
search result to users; $q$ is the positive payoff users perceive when certain information is
returned by the search engine for a particular query; $C(t, x_j)$ is the transportation cost incurred
when there is some distance between user's address $t$ and search engine $j$'s location $x_j$.

Let $u_1(\xi)=u_2(\xi)$ we can find the location of users who are indifferent between searching in
two engines:
\begin{eqnarray*}
  \xi_1 &=& \frac{(1-\zeta)q+x_2^2}{2x_2} \\
  \xi_2 &=& \frac{1-x_2^2-(1-\zeta)q}{2(1-x_2)}
\end{eqnarray*}

Then the market share of search engine 2 is
\begin{equation*}
    n_2(x_2)=\xi_2-\xi_1=\frac{1}{2}[1-\frac{(1-\zeta)q}{x_2(1-x_2)}]
\end{equation*}
and search engine 1 obtains the remaining market share: $n_1=1-n_2$. By applying the first-order
condition $\frac{d n_2}{d x_2}=0$, we have $x_2^*=\frac{1}{2}$, i.e., the maximum differentiation.

Letting $x_2=\frac{1}{2}$ we have
\begin{eqnarray*}
    n_1 &=&  \frac{1}{2}+2(1-\zeta)q \\
    n_2 &=& \frac{1}{2}- 2(1-\zeta)q
\end{eqnarray*}

As we can see, when two search engines provide the same quality of service ($\zeta=1$), they will
divide the market share equally. The less quality search engine 2 provides, the less market share it
can hold.

Since the impression number for a particular keyword in a search engine is proportional to the users
it attracts: the more users see the advertisement, the more impressions the ad would receive in
general. To be aligned with the monopoly case in previous section, here we assume the total supply is
still $S$ and the supply of each search engine is denoted by:
\begin{eqnarray*}
  S_1=S\cdot\frac{n_1}{n_1+n_2}=S\cdot n_1 \\
  S_2=S\cdot\frac{n_2}{n_1+n_2}=S\cdot n_2
\end{eqnarray*}
Since $n_1\geq n_2$, we have also $S_1\geq S_2$.

\subsection{Competition for Advertisers in Stage II and III}
Search engines compete for advertisers in the last two stages to maximize their revenues
\emph{subject to} the supply constraint $(S_1, S_2)$ determined in Stage I. In Stage II, search
engines determine their optimal prices $(p_1,p_2)$ for charging advertisers; and consequently in
Stage III, advertisers would choose their favorite search engine for advertisements based on the
previously announced prices. Facing the new advertiser sets in Stage III, search engines may want to
revert to the second stage and revise their optimal prices, and consequently, advertisers would make
necessary adjustment in the third stage. Therefore, Stage II and III would \emph{alternate}
dynamically until it reaches certain stable state. we will discuss this dynamic process in details in
the following section.

For advertiser $i\in \mathcal{I}$, the utility of participating in the ad campaign in either search
engine is:
\begin{eqnarray}
  \pi_1^i &=& \max \{(v_i-p_1)\frac{B_i}{p_1}\, , \,0\} \label{eq:uti:adv} \\
  \pi_2^i &=& \max \{(v_i\rho_i-p_2)\frac{B_i}{p_2}\, ,\, 0\}
\end{eqnarray}
where $\rho_i\in[0,1]$ is called \emph{discount factor} denoting advertiser $i$'s perceived
``disability'' of search engine 2 to convert the impressions to clicks (or sales of products). We
assume that search engine 1 owns better technology and is able to match users' interest with the most
suitable ads, hence can generate a higher \emph{click-through rate} (users' probability of clicking
after seeing the ads) or \emph{conversion rate} (users' probability of purchase the product or
service after clicking the ads) than search engine 2. So in general advertisers would evaluate each
impression in search engine 1 higher than in engine 2. For simplicity of notation, we have normalized
the discount factor of per-impression value in search engine 1 as unity. In practical market,
advertisers can be roughly classified into two categories: \emph{brand advertisers} and
\emph{performance advertisers} \cite{branding_adv}. Brand advertisers usually have higher $\rho$
since they aim to promote the brand awareness among users and hence the relative technology
disadvantage in search engine 2 would have less effect on their values for each attention/impression.
However, for performance advertisers who care more on the click-through rate or conversion rate, the
technology disadvantage would affect their values for each impression more and therefore result in
lower values of $\rho$. To be more exact, we let the expectation $E(\rho)$ of discount factor serve
as the cutting-off value for two types of advertisers, i.e., advertisers with higher $\rho$ than
$E(\rho)$ is defined as brand advertisers and the others are performance advertisers in our model.

By letting $\pi_1^i\geq \pi_2^i$ we can derive the condition under which advertiser $i$ would choose
search engine 1:
\begin{equation*}
    \rho_i\leq \frac{p_2}{p_1}
\end{equation*}

Assuming that advertisers are re-ordered according to $\rho_i$. Then the division of advertisers can
be depicted in figure \ref{fig:div_advertisers} where $\mathcal{I}_1(p_1,p_2)=\{i\in
\mathcal{I}:\rho_i\leq \frac{p_2}{p_1}\}$ denotes the set of advertisers who prefer search engine 1
and $\mathcal{I}_2(p_1,p_2)=\{i\in \mathcal{I}:\rho_i> \frac{p_2}{p_1}\}$ the set of advertisers
preferring engine 2.

\begin{figure}[ht]
  \centering
  \includegraphics[scale=1]{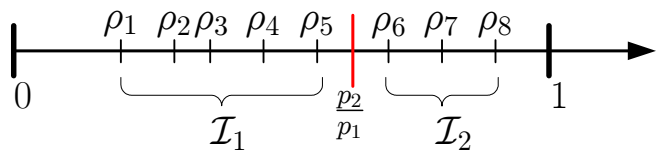}\\
  \caption{The Division of Advertisers}\label{fig:div_advertisers}
\end{figure}

After initial price $p_1$ and $p_2$ are set in the market, the advertisers set is divided into
$\mathcal{I}_1$ and $\mathcal{I}_2$. Then each search engine can compute its optimal price
${p}^*_1(\mathcal{I}_1)$ and ${p}^*_2(\mathcal{I}_2)$ independently as the monopoly case and price
ratio $\frac{{p}^*_2(\mathcal{I}_2)}{{p}^*_1(\mathcal{I}_1)}$ gets updated. If it happens that the
new price ratio divides the advertisers set into $\mathcal{I}_1$ and $\mathcal{I}_2$, we say this is
a \emph{Nash equilibrium (NE) price pair} as $(p_1^{NE}, p_2^{NE})$ and neither search engine has
incentive to deviate unilaterally. Otherwise, the process will iterate until the prices become
stable.

Defining first the set of advertisers who participate the advertising campaign as follows.
\begin{eqnarray}
  \mathcal{I}_1^+(p_1,p_2) &\triangleq& \{i\in \mathcal{I}: \rho_i\leq\frac{p_2}{p_1}, v_i\geq p_1\}\label{eq:def_I_1} \\
  \mathcal{I}_2^+(p_1,p_2) &\triangleq& \{i\in \mathcal{I}: \rho_i>\frac{p_2}{p_1}, \rho_i v_i\geq
  p_2\}\label{eq:def_I_2}
\end{eqnarray}

We now give the formal definition of NE price pair.
\begin{defn}
A price pair of $(p_1,p_2)$ is called a \emph{Nash equilibrium price pair} if
$p_1=p^*(\mathcal{I}_1^+(p_1,p_2))$ and $p_2=p^*(\mathcal{I}_2^+(p_1,p_2))$ where $p^*(\mathcal{I})$
is computed according to algorithm \ref{alg:calculate_opt_price}.
\end{defn}

It's easy to see that under the NE price pair $(p_1,p_2)$, for any advertiser $i\in\mathcal{I}_1^+$
or $i\in\mathcal{I}_2^+$, it would have no incentive to switch to the other search engine; for
advertiser $i\in \mathcal{I}_1 \backslash \mathcal{I}_1^+$, since $\rho_i\leq\frac{p_2}{p_1}$ and
$v_i<p_1$, it holds that $\rho_i v_i<p_2$, thus $i$ would not switch to engine 2 which generates zero
utility according to equation (\ref{eq:uti:adv}); for advertiser $i\in \mathcal{I}_2 \backslash
\mathcal{I}_2^+$, since $\rho_i>\frac{p_2}{p_1}$ and $\rho_i v_i<p_2$, we have $v_i<p_1$ so $i$ have
no incentive to switch to engine 1. Thus the NE price pair would induce a stable state to the
competition system.

\begin{prop}\label{prop:NE_maynot_exist}
None-zero NE price pairs may not exist.
\end{prop}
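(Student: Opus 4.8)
The plan is to prove the proposition by exhibiting a concrete instance of the Stage~II--III subgame in which no price pair $(p_1,p_2)$ with $p_1,p_2>0$ satisfies the fixed-point conditions $p_1=p^*(\mathcal{I}_1^+(p_1,p_2))$ and $p_2=p^*(\mathcal{I}_2^+(p_1,p_2))$. The first step is a structural reduction: since $\mathcal{I}$ is finite, the split of advertisers between the two engines is governed entirely by the ratio $r=p_2/p_1$ (together with the participation conditions $v_i\ge p_1$ and $\rho_i v_i\ge p_2$), so there are only finitely many candidate partitions --- one for each interval into which $r$ can fall relative to the sorted discount factors $\rho_i$. A non-zero NE must induce a partition that is \emph{self-consistent}: feeding that partition into Algorithm~\ref{alg:calculate_opt_price} for each engine must return prices whose ratio lands back in the interval that produced the partition. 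Hence it suffices to construct an instance in which \emph{every} candidate partition is self-inconsistent.

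Second, I would take two advertisers $a,b$ with distinct discount factors, say $\rho_a=0.4<\rho_b=0.6$, values large enough that participation is never binding (e.g.\ $v_a=v_b=100$), budgets $B_a=1$ and $B_b=10$, and Stage~I supplies $S_1=S_2=1$ (attainable, e.g., with $\zeta=1$ and total supply $S=2$). There are then exactly three candidate partitions. In the partition where $a$ prefers engine~1 and $b$ prefers engine~2 (requiring $\rho_a\le r<\rho_b$), Algorithm~\ref{alg:calculate_opt_price} returns $p_1^*=B_a/S_1=1$ on $\{a\}$ and $p_2^*=B_b/S_2=10$ on $\{b\}$ (its effective value being $\rho_b v_b=60$), so $r'=p_2^*/p_1^*=10\ge\rho_b$ and $b$ would defect to engine~1 --- inconsistent. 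In the partition where both prefer engine~1 ($r\ge\rho_b$), engine~2 has the empty advertiser set, so $p_2^*=0$ (Algorithm~\ref{alg:calculate_opt_price} returns $v_0=0$ when $m=0$) and $r'=0<\rho_b$ --- inconsistent, and not even a non-zero pair. Symmetrically, if both prefer engine~2 ($r<\rho_a$), then $p_1^*=0$, so $r'=+\infty\not<\rho_a$ --- inconsistent. Since no self-consistent partition exists, this instance has no non-zero NE price pair, which proves the proposition.

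The computations here are elementary; the two points needing care are verifying that the participation sets $\mathcal{I}_j^+$ genuinely coincide with the preference sets $\mathcal{I}_j$ under the prices produced in each case (which is why the $v_i$ are taken large), and handling the degenerate behavior of $p^*(\cdot)$ on the empty set. These empty-set cases are precisely what drives the best-response dynamics to oscillate between ``all advertisers on engine~1'' and ``all advertisers on engine~2'' without ever settling --- indeed one can note that a single advertiser already suffices to produce this oscillation --- and this is the qualitative reason for non-existence, which in turn motivates the ``simultaneous participation'' relaxation analyzed in the remainder of the section.
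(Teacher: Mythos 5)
Your proposal is correct: every step checks out (with $S_1=S_2=1$, the candidate prices $p^*(\{a\})=1$, $p^*(\{b\})=10$, $p^*(\{a,b\})=11$ and $p^*(\varnothing)=0$ are all computed correctly by algorithm \ref{alg:calculate_opt_price}, the chosen values $v_a=v_b=100$ do keep the participation constraints slack at any candidate fixed point, and none of the three possible partitions is self-consistent), so your two-advertiser instance indeed admits no non-zero NE price pair, which is all the proposition claims. The paper establishes the same conclusion with an even simpler instance --- a single advertiser --- and argues dynamically: whichever engine the advertiser joins, the other engine's price drops to zero, the advertiser defects to it, and the best-response process oscillates forever; you yourself note this one-advertiser degeneracy at the end. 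The difference is mainly one of style: the paper's argument is an informal oscillation story, while your static argument (enumerate the finitely many partitions induced by the ratio $p_2/p_1$ and show each fails the fixed-point conditions $p_1=p^*(\mathcal{I}_1^+)$, $p_2=p^*(\mathcal{I}_2^+)$) is checked directly against the formal NE definition and is therefore somewhat more rigorous, at the cost of a larger example and a case analysis; moreover, your instance has the mild additional feature that the failure occurs even when both engines' candidate prices are strictly positive, rather than being driven solely by the empty-set/zero-price degeneracy.
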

A simple counter-example to illustrate proposition \ref{prop:NE_maynot_exist} is when there is only
one advertiser in the system. No matter which search engine this advertiser chooses, the price in
\emph{the other} search engine would be zero since it attracts no advertisers. Then the advertiser
would have incentive to join \emph{the other} search engine due to the zero price. However, once the
advertiser switches, the price in the other search engine would become positive and price in the
original engine decreases to zero. Thus the advertiser would keep switching between two search
engines and no stable prices can be reached.

\begin{prop}\label{prop:price_relation}
If NE price pair $(p_1^{NE}, p_2^{NE})$ exists, it must be $p_1^{NE}\geq p_2^{NE}$.
\end{prop}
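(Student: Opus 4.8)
The plan is to argue by contradiction. Suppose a Nash equilibrium price pair $(p_1^{NE},p_2^{NE})$ exists but $p_1^{NE}<p_2^{NE}$; I will derive a contradiction from the defining equation $p_2^{NE}=p^*\bigl(\mathcal{I}_2^+(p_1^{NE},p_2^{NE})\bigr)$ together with the normalization $\rho_i\in[0,1]$ (engine 1's per-impression value is normalized to unity).

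First I would record that $p_1^{NE}<p_2^{NE}$ forces $p_2^{NE}>0$. The crux is then to show $\mathcal{I}_2^+(p_1^{NE},p_2^{NE})=\emptyset$, i.e.\ no advertiser participates in engine $2$. By (\ref{eq:def_I_2}) an advertiser $i$ lies in $\mathcal{I}_2^+$ only if $\rho_i>p_2^{NE}/p_1^{NE}$; I would rewrite this as $\rho_i\,p_1^{NE}>p_2^{NE}$, a form that also makes sense in the degenerate case $p_1^{NE}=0$. Since $\rho_i\le 1$ and $p_1^{NE}\ge 0$, we have $\rho_i\,p_1^{NE}\le p_1^{NE}<p_2^{NE}$, so this condition fails for every $i$; hence $\mathcal{I}_2^+$ is empty.

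Finally I would apply the NE condition to engine $2$: with no participating advertisers the aggregate demand facing engine $2$ is identically zero, so algorithm \ref{alg:calculate_opt_price} --- equivalently the market-clearing identity $p^*\cdot S_2=\sum_{i\in\mathcal{I}_2^+}B_i=0$ --- yields $p^*(\emptyset)=0$. Thus $p_2^{NE}=0$, contradicting $p_2^{NE}>0$, and therefore $p_1^{NE}\ge p_2^{NE}$.

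The argument is short and uses only the $p_2$-side of the equilibrium condition and the inequality $\rho_i\le 1$; it does not need the monotonicity lemmas. I do not expect a serious obstacle: the only points requiring care are the boundary cases --- handling $p_1^{NE}=0$ (which is why I would phrase the engine-$2$ preference condition as $\rho_i p_1^{NE}>p_2^{NE}$ rather than as a ratio) and making explicit the convention $p^*(\emptyset)=0$, i.e.\ that a search engine attracting no advertisers can sustain no positive price at equilibrium.
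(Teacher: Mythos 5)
Your proposal is correct and follows essentially the same route as the paper's own proof: assume $p_1^{NE}<p_2^{NE}$, observe that $\rho_i\le 1$ makes $\mathcal{I}_2^+$ empty so the equilibrium condition forces $p_2^{NE}=0$, and contradict nonnegativity of prices. Your extra care with the degenerate case $p_1^{NE}=0$ and the explicit convention $p^*(\emptyset)=0$ are minor refinements of the same argument.
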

\begin{proof}
Assuming $p_1^{NE}< p_2^{NE}$, then since $\frac{p_2^{NE}}{p_1^{NE}}>1$, we have
$\mathcal{I}_2^+(p_1,p_2)=\varnothing$. Therefore $p_2^{NE}=0$ and $p_1^{NE}< 0$. However, it cannot
be the case since rational search engine would never set negative prices.
\end{proof}

\begin{prop}
In the stable state, search engine 2 cannot make higher revenue than engine 1.
\end{prop}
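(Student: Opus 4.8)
The plan is to write each search engine's revenue in the stable (Nash equilibrium) state as the product of its equilibrium price and its supply, and then compare the two products termwise. Write $(p_1^{NE},p_2^{NE})$ for the NE price pair. By the definition of a NE price pair, $p_j^{NE}=p^*(\mathcal{I}_j^+(p_1^{NE},p_2^{NE}))$ is exactly the monopoly‑optimal price engine $j$ would post against its own captured set $\mathcal{I}_j^+$ with supply $S_j$, so the monopoly analysis of Section \ref{sec:3:mono_model} applies to each engine separately (for engine $2$ one simply substitutes the effective per‑impression values $\rho_i v_i$ for $v_i$, which changes nothing in that machinery since it uses only the ordering and magnitudes of the values). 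By Proposition \ref{DemandEqualSupply}, at the optimal price the whole supply is sold, i.e. $\min(S_j,D_j(p_j^{NE}))=S_j$, hence $R_j=p_j^{NE}\cdot S_j$ for $j\in\{1,2\}$.

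With the revenues in this form the conclusion is a two‑line chain. Proposition \ref{prop:price_relation} gives $p_1^{NE}\ge p_2^{NE}\ge 0$, and Stage I gives $S_1=S\,n_1\ge S\,n_2=S_2\ge 0$ (because $\zeta\le 1$ forces $n_1\ge \tfrac12\ge n_2$). Therefore $R_1=p_1^{NE}S_1\ge p_2^{NE}S_1\ge p_2^{NE}S_2=R_2$, which is the claim. I would also add one sentence noting the statement is understood conditionally on the existence of a (non‑zero) NE price pair, in view of Proposition \ref{prop:NE_maynot_exist}; when such a pair exists the argument above is self‑contained.

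The step that needs care — and the main obstacle — is justifying the \emph{equality} $R_j=p_j^{NE}S_j$ rather than merely an inequality, since a one‑sided bound would only cap $R_1$ and $R_2$ from above and would not separate them. This reduces to checking that at a NE price the engine is never left holding unsold supply, i.e. demand meets or exceeds supply at $p_j^{NE}$. In the generic case this is precisely Proposition \ref{DemandEqualSupply}; in the boundary case where Algorithm \ref{alg:calculate_opt_price} returns $v_m$ one has to note that there demand \emph{strictly} exceeds supply at the capped price, so $\min(S_j,D_j(p_j^{NE}))=S_j$ still holds and the revenue is again $p_j^{NE}S_j$. Once this is in place, the inequalities $p_1^{NE}\ge p_2^{NE}$ and $S_1\ge S_2$ close the argument immediately.
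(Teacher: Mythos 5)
Your proposal is correct and follows essentially the same route as the paper: express each engine's equilibrium revenue as $R_j=p_j^{NE}\cdot S_j$ and combine $p_1^{NE}\geq p_2^{NE}$ (Proposition \ref{prop:price_relation}) with $S_1\geq S_2$ from Stage I. Your extra care in justifying the equality $R_j=p_j^{NE}S_j$ (including the boundary case of Algorithm \ref{alg:calculate_opt_price}) is a welcome refinement of a step the paper simply asserts.
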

\begin{proof}
Since $R_1=p_1^{NE}\cdot S_1$ and $R_2=p_2^{NE}\cdot S_2$, from proposition \ref{prop:price_relation}
and $S_1\geq S_2$, it's easy to see that $R_1\geq R_2$.
\end{proof}

Denote $\nu$ as the price ratio $\frac{p_2}{p_1}$ which determines advertisers' preferences, we
define the optimal price ratio as:
\begin{equation*}
    f(\nu)=\frac{p_2^*(\mathcal{I}_2(\nu))}{p_1^*(\mathcal{I}_1(\nu))}
\end{equation*}
where $p^*(\mathcal{I})$ is obtained according to algorithm \ref{alg:calculate_opt_price}. If the
advertiser partitions generated by $\nu$ are the same as those generated by the optimal price ratio
$f(\nu)$, then the partitions are stable and the optimal prices become NE price pair. The problem
reduces to find the \emph{fixed points} which satisfy $f(\nu^*)=\nu^*$. Notice that $f(\nu)$ is
piece-wise constant and its value changes at $\nu=\rho_i$, $i\in\{1,\ldots,m\}$.

From the definitions of $\mathcal{I}_1$ and $\mathcal{I}_2$ we see that as $\nu$ increases, the
preferred set of $\mathcal{I}_1$ would expand while $\mathcal{I}_2$ shrinks. According to lemma
\ref{lem:price_over_ad_set}, we know that $p_1^*$ would increase while $p_2^*$ decreases. Therefore
$f(\nu)$ should be a non-increasing function of $\nu$.

We can now show the dynamics of function $f(\nu)$ in figure \ref{Price_ratio}. In this example we
assume there are five advertisers re-ordered by their values of $\rho$ such that $\rho_i\leq
\rho_{i+1}$, $i\in\{1,2,3,4\}$. Similarly, there may be two different scenarios for the location of
fixed point $v^*$: (a) $v^*\in(\rho_i,\rho_{i+1})$, $i\in\{1,\ldots,m-1\}$ as shown in figure
\ref{Price_ratio_1}, and (b) $v^*=\rho_i$, $i\in\{1,\ldots,m\}$ as shown in figure
\ref{Price_ratio_2}.

\begin{figure}[ht]
  \centering
  \subfloat[Determined Division]{
    \label{Price_ratio_1}
    \begin{minipage}{0.4\textwidth}
        \centering
        \includegraphics[scale=0.75]{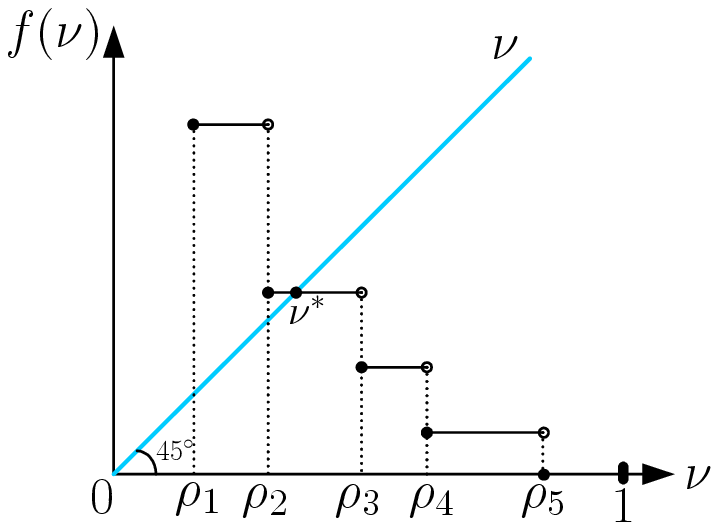}\\
    \end{minipage}
  }
  \subfloat[Undetermined Division]{
    \label{Price_ratio_2}
    \begin{minipage}{0.4\textwidth}
        \centering
        \includegraphics[scale=0.75]{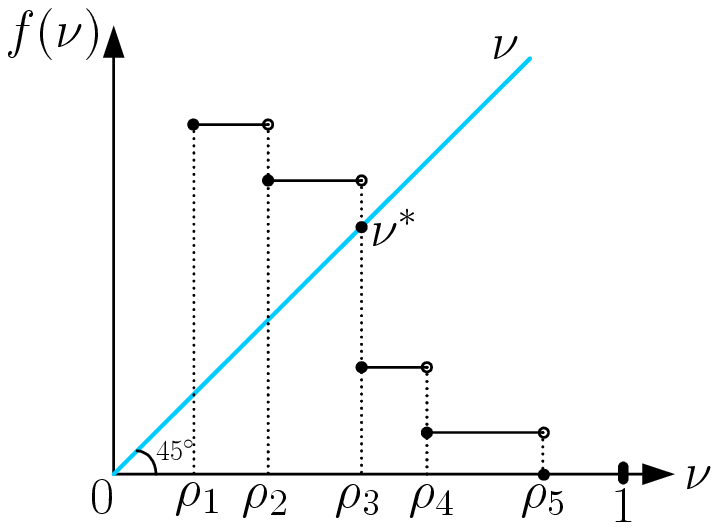}\\
    \end{minipage}
  }
  \caption{Division of Advertisers Set by their Preferences}\label{Price_ratio}
\end{figure}

For case (a), the optimal $\nu^*$ divides the advertisers set into exactly two subsets, and all
advertisers with strict preferences over certain search engine are aggregated to one of the subsets.
The market would become stable after each search engine sets their optimal price. For case (b),
however, there is one special advertiser who would keep switching from one search engine to the
other. To illustrate it, assuming the index of this special advertiser is $l$ which satisfies the
following condition:
\begin{equation*}
    \frac{p_2^*(\{l,l+1,\ldots,m\})}{p_1^*(\{1,2,\ldots,l-1\})}>\rho_l>\frac{p_2^*(\{l+1,\ldots,m\})}{p_1^*(\{1,2,\ldots,l\})}
\end{equation*}
The first inequality above implies that advertiser $l$ prefers search engine 1 if $l$ has already
joined the system of search engine 2, while the second inequality implies that he would prefer engine
2 if he is associated with engine 1. Therefore, advertiser $l$ would keep switching between two
search engines. In this case, we call advertiser $l$ as the \emph{undetermined} advertiser.

The undetermined advertiser problem arises from our assumption that advertisers can only purchase
service from one engine. This resembles the classic \emph{oscillation} problem in the multi-path
routing when all traffic is aggregated to the least congested path \cite{Multipath_routing}. Similar
to the ``splittable'' model in \cite{Selfish_routing} where network users are permitted to route
traffic fractionally over many paths, we can also make the following assumption for our model:

\begin{assumption}
\emph{Splittable Budgets:} We assume advertisers can arbitrarily split their budgets and invest them
into both search engines to maximize their utility.
\end{assumption}

Under this splittable assumption, the dynamics of undetermined advertiser $l$'s strategic behavior
could be interpreted as follows: assuming starting from the initial state where advertiser $l$ has
joined engine 1, and is facing a lower price ratio ($\frac{p_2}{p_1}<\rho_l$) which indicates him to
invest more on engine 2; then $l$ would try to split his budget into two parts: $(1-\alpha) B_l$ goes
to engine 1 and the rest of $\alpha B_l$ goes to engine 2, with $\alpha\in[0,1]$. As advertiser $l$
invests more and more budgets on engine 2, i.e., $\alpha$ keeps growing from zero to one, the price
ratio $\frac{p_2}{p_1}$ would keep rising until at certain $\alpha^*\in(0,1)$ it equals $\rho_l$ and
advertiser $l$ would have no incentive to invest more on engine 2.

It remains to be shown whether the price ratio above would increase ``smoothly''\footnote{To be more
exact, we need to guarantee that there are no discontinuous points. Otherwise, the optimal $\alpha^*$
may not exist.} as $\alpha$ increases and whether there always exists $\alpha^*$ for the undetermined
advertiser to divide his budget. We summarize our conclusion in the following theorem.
\begin{theo}
Assuming that there exists an undetermined advertiser $l$ and this advertiser can purchase service
from both search engines. In particular, advertiser $l$ can arbitrarily split his budget into
$(1-\alpha) B_l$ and $\alpha B_l$ with $\alpha\in[0,1]$, where the former is invested to engine 1 and
the latter to engine 2. Then there must exist $p_1^*$, $p_2^*$ and $\alpha^*\in[0,1]$ such that
$\frac{p_2^*}{p_1^*}=\rho_l$.
\end{theo}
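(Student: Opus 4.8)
The plan is to treat the price ratio as a function of the split parameter $\alpha$ and apply the intermediate value theorem. For $\alpha\in[0,1]$ let $p_1^*(\alpha)$ be the market-clearing price that algorithm \ref{alg:calculate_opt_price} produces for engine 1 when its advertiser pool is the set $\{1,\dots,l-1\}$ of advertisers who strictly prefer engine 1 (with full budgets) together with advertiser $l$ contributing the reduced budget $(1-\alpha)B_l$; symmetrically, let $p_2^*(\alpha)$ be engine 2's clearing price when its pool is $\{l+1,\dots,m\}$ (full budgets) plus advertiser $l$ contributing $\alpha B_l$. Put $g(\alpha)=p_2^*(\alpha)/p_1^*(\alpha)$, read in $[0,\infty]$ with the degenerate endpoints handled by hand. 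The theorem reduces to exhibiting $\alpha^*\in[0,1]$ with $g(\alpha^*)=\rho_l$, since then $p_1^*:=p_1^*(\alpha^*)$ and $p_2^*:=p_2^*(\alpha^*)$ are as required.

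First I would identify the endpoints. At $\alpha=0$ all of $l$'s budget sits with engine 1, so $g(0)=p_2^*(\{l+1,\dots,m\})/p_1^*(\{1,\dots,l\})<\rho_l$ by the right-hand inequality in the definition of the undetermined advertiser; at $\alpha=1$ all of it sits with engine 2, so $g(1)=p_2^*(\{l,\dots,m\})/p_1^*(\{1,\dots,l-1\})>\rho_l$ by the left-hand inequality. If $\{1,\dots,l-1\}$ or $\{l+1,\dots,m\}$ is empty the relevant endpoint price is $0$ and the inequalities still hold in $[0,\infty]$; in particular the single-advertiser case of Proposition \ref{prop:NE_maynot_exist} is covered, since there $g(\alpha)=\alpha S_1/((1-\alpha)S_2)$ sweeps all of $[0,\infty]$.

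Next I would show $g$ is non-decreasing and continuous on $[0,1]$. Monotonicity: raising $\alpha$ shrinks the budget $l$ places with engine 1 and enlarges the one he places with engine 2, and re-running the proof of Lemma \ref{lem:price_over_ad_set} with a fractional budget contribution from $l$ in place of set membership shows the clearing price is monotone in each advertiser's budget; hence $p_1^*(\alpha)$ is non-increasing, $p_2^*(\alpha)$ is non-decreasing, and $g$ is non-decreasing. Continuity: write engine $j$'s clearing condition as $pS_j=\Phi_j(p,\alpha)$, where $\Phi_j(\cdot,\alpha)$ is the weakly decreasing, piecewise-constant aggregate spend of the participating advertisers and $\Phi_j(p,\cdot)$ is affine in $\alpha$; since $p\mapsto pS_j-\Phi_j(p,\alpha)$ is strictly increasing in $p$, its unique zero $p_j^*(\alpha)$ moves continuously as $\alpha$ varies, the one delicate point being when $p_j^*(\alpha)$ sits exactly at some $v_i$ (the indifferent-advertiser branch of algorithm \ref{alg:calculate_alloc_of_supply}), where the fractional term $\pm\,\alpha B_l/S_j$ is precisely what interpolates continuously between the two adjacent plateaus of $\Phi_j$. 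Given that $g$ is continuous, non-decreasing, and $g(0)<\rho_l<g(1)$, the intermediate value theorem supplies the desired $\alpha^*$.

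The main obstacle is the continuity step: ruling out a jump of $g$ over $\rho_l$ at an $\alpha$ where the participating set of one engine changes. What makes it work is that advertiser $l$ himself participates in both engines throughout the pertinent range of $\alpha$ --- his value $v_l$ is fixed and the clearing prices stay on the correct side of it --- so the only set changes come from the fixed advertisers $\{1,\dots,l-1\}$ and $\{l+1,\dots,m\}$ entering or leaving, and at such a price the clearing equation is solved on a flat part of $\Phi_j$ where the $\alpha$-dependent term alone moves the solution. If one prefers to avoid this case analysis, one can instead observe that $p_j^*(\alpha)$ is the unique fixed point of a continuous monotone self-map and invoke continuity of such fixed points, but the elementary argument suffices.
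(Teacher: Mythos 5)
Your proposal is correct and follows essentially the same route as the paper: fix the two determined pools $\{1,\dots,l-1\}$ and $\{l+1,\dots,m\}$, view the two clearing prices as functions of the split $\alpha$, establish that the clearing price is continuous and monotone in an advertiser's budget (the paper's lemma on $p^*(B_i)$, including the same indifferent-advertiser case analysis you sketch), and conclude by the intermediate value theorem using the two defining inequalities of the undetermined advertiser as the endpoint bounds. Your extra remarks (empty-pool endpoints, justifying budget-monotonicity by adapting the set-inclusion lemma rather than the paper's feasibility argument for the optimization formulation) are minor variations, not a different proof.
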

\begin{proof}
Denote $\mathcal{I}_1\triangleq \{1,2,\ldots,l-1\}$ and $\mathcal{I}_2\triangleq
\{l+1,l+2,\ldots,m\}$. By previous analysis we know that advertisers in $\mathcal{I}_1$ would always
be associated with engine 1 and advertisers in $\mathcal{I}_2$ be associated with engine 2. It
remains to be shown the effect of advertiser $l$'s splitting decision on the price ratio. Define
first that $p_1^0=p_1^*(\mathcal{I}_1)$, $p_1^1=p_1^*(\mathcal{I}_1\cup \{l\})$ and
$p_1^\alpha=p_1^*(\mathcal{I}_1\cup \alpha\{l\})$ where $\alpha\{l\}$ denotes that advertiser $l$
participates in search engine 1, but with fractional budget of $\alpha B_i$; similarly, we define
$p_2^0=p_2^*(\mathcal{I}_2)$, $p_2^1=p_2^*(\mathcal{I}_2\cup \{l\})$ and
$p_2^\alpha=p_2^*(\mathcal{I}_2\cup \alpha\{l\})$.

To prove that $\frac{p_2^\alpha}{p_1^{1-\alpha}}$ changes \emph{continuously} from
$\frac{p_2^0}{p_1^1}$ to $\frac{p_2^1}{p_1^0}$ as $\alpha$ increases from zero to one, we first
present lemma \ref{prop:continuity_as_budget} as follows.

\begin{lem}\label{prop:continuity_as_budget}
Given a monopolistic search engine and the advertisers set $\mathcal{I}$. Assuming all parameters of
the ad system, including engine's supply, advertisers' values and budgets, are definite except the
budget $B_i$ of certain advertiser $i\in\mathcal{I}$, then the optimal price $p^*(\mathcal{I})$ can
be regarded as a function over the variable $B_i\geq 0$. Furthermore, the function is continuous and
non-decreasing as $B_i$ increases.
\end{lem}
\begin{proof}
Since from algorithm \ref{alg:calculate_opt_price}, for each realization of $B_i$, we can compute a
definite value of optimal price, the optimal price can therefore be regarded as a function of $B_i$.

The property of non-decreasing (or weakly increasing) is easy to see by contradiction. Assuming for
$B_i$ the optimal price is $p_1$ and for $B_i+\delta$ ($\delta>0$) the optimal price is $p_2$ with
$p_2<p_1$. Then $p_1$ would be the optimal solution for formulation in
(\ref{eqn:max_mono_simplified})-(\ref{eqn:mono_cons_4_simplified}) and we let $\vec{q^*}$ be one of
the binding optimal allocations. After we increase advertiser $i$'s budget to $B_i+\delta$, the
previous solution combination $(p_1, \vec{q^*})$ would still satisfy the constraints of the revised
optimization problem since $p_1\cdot q_i^*\leq B_i\leq B_i+\delta$ and all other conditions remain
unchanged. Since $p_2$ is the optimal solution for the revised problem which maximizes the objective
function of $p\cdot S$, we have $p_2\cdot S\geq p_1\cdot S$, thus, $p_2\geq p_1$. This contradicts
with our previous assumption of $p_2<p_1$.

We now turn to prove the property of continuity. Let $p^*(B_i)$ denote the optimal price under budget
$B_i$ and $p^*(B_i+\varepsilon)$ the optimal price under budget $B_i+\varepsilon$ where $\varepsilon$
is any small real number. As shown in figure \ref{fig:SE_Profit_over_Price_ExPost}, for arbitrary
budget $B_i\in[0,\infty)$, there exist two different scenarios for computing the optimal price: (a)
all advertisers are determined; (b) there is one undetermined advertiser. We will discuss these two
cases separately as follows.

Case (a): $v_l<p^*<v_{l+1}$, $l\in\{0,1,\ldots,m-1\}$ (let $v_0=0$). We further consider two cases
for the index of advertiser $i$ whose budget $B_i$ is the variable: (i) $i\leq l$; (ii) $i>l$. For
type (i), since $p^*(B_i)=(B_{l+1}+B_{l+2}+\cdots+B_m)/S$, the change of $B_i$ would not affect the
value of optimal price, therefore, we have $\lim\limits_{\varepsilon\to 0}
p^*(B_i+\varepsilon)=p^*(B_i)$; For type (ii), since advertiser $i$ is in the participating set, the
change of $B_i$ does affect the optimal price. Assuming $\varepsilon$ is small enough such that
$|\varepsilon|<\min\{(p^*-v_l)\cdot S, (v_{l+1}-p^*)\cdot S\}$. This condition guarantees that
$p^*(B_i+\varepsilon)=p^*(B_i)+\varepsilon/S$ is still in the interval of $(v_l,v_{l+1})$. Therefore,
$\lim\limits_{\varepsilon\to 0} p^*(B_i+\varepsilon)=p^*(B_i)+\lim\limits_{\varepsilon\to
0}\varepsilon/S=p^*(B_i)$.

Case (b): $p^*=v_l$, $l\in\{1,\ldots,m\}$ and $B_{l+1}+\cdots+B_m\leq v_l\cdot S\leq
B_{l}+B_{l+1}+\cdots+B_m$ where advertiser $l$ would only consume part of his budget under price
$v_l$. For $i< l$, the change of $B_i$ would not affect $p^*$, so we only need to consider the case
for $i\geq l$. We now consider three possible scenarios for $v_l\cdot S$:
\begin{enumerate}[(i)]
  \item $B_{l+1}+\cdots+B_m<v_l\cdot S<B_{l}+B_{l+1}+\cdots+B_m$. Assuming $\varepsilon$ is small enough such
  that $|\varepsilon|<\min\{v_l\cdot S-(B_{l+1}+\cdots+B_m), B_{l}+B_{l+1}+\cdots+B_m-v_l\cdot S\}$.
  This condition guarantees that after $B_i$ has changed $\varepsilon$, $v_l\cdot S$ is still in the
  interval of $(B_{l+1}+\cdots+B_m+\varepsilon,B_{l}+B_{l+1}+\cdots+B_m+\varepsilon)$, which means
  that $\lim\limits_{\varepsilon\to 0} p^*(B_i+\varepsilon)=\lim\limits_{\varepsilon\to 0}
  v_l=v_l=p^*(B_i)$;
  \item $v_l\cdot S=B_{l+1}+\cdots+B_m$. When $\varepsilon$ is negative, it will be equivalent to the
  above case (i) and therefore we have $\lim\limits_{\varepsilon\to 0^-}
  p^*(B_i+\varepsilon)=\lim\limits_{\varepsilon\to 0^-} v_l=p^*(B_i)$; when $\varepsilon$ is
  positive, it will be equivalent to case (a) and therefore $\lim\limits_{\varepsilon\to 0^+}
  p^*(B_i+\varepsilon)=p^*(B_i)+\lim\limits_{\varepsilon\to 0^+} \varepsilon/S=p^*(B_i)$;
  \item $v_l\cdot S=B_{l}+B_{l+1}+\cdots+B_m$. When $\varepsilon$ is negative, it will be equivalent
  to case (a) and therefore $\lim\limits_{\varepsilon\to 0^-}
  p^*(B_i+\varepsilon)=p^*(B_i)+\lim\limits_{\varepsilon\to 0^-} \varepsilon/S=p^*(B_i)$; when $\varepsilon$ is
  positive, it will be equivalent to  the
  above case (i) and therefore we have $\lim\limits_{\varepsilon\to 0^+}
  p^*(B_i+\varepsilon)=\lim\limits_{\varepsilon\to 0^+} v_l=p^*(B_i)$.
\end{enumerate}

Therefore now we can conclude that for any $B_i\in[0,\infty)$, it always holds that
$\lim\limits_{\varepsilon\to 0} p^*(B_i+\varepsilon)=p^*(B_i)$. So $p^*(B_i)$ is a continuous
function over $B_i\in[0,\infty)$.
\end{proof}

From lemma \ref{prop:continuity_as_budget} we see that $p_2^\alpha$ is continuous and weakly
increasing function of $\alpha$ and $p_1^{1-\alpha}$ is continuous and weakly decreasing function of
$\alpha$, thus the price ratio $\frac{p_2^\alpha}{p_1^{1-\alpha}}$ is continuous and weakly
increasing from $\frac{p_2^0}{p_1^1}$ to $\frac{p_2^1}{p_1^0}$ as $\alpha$ rises from zero to one.
Thus there must exist an optimal $\alpha^*\in[0,1]$ such that
$\frac{p_2^{\alpha^*}}{p_1^{1-\alpha^*}}=\rho_l\in(\frac{p_2^0}{p_1^1},\frac{p_2^1}{p_1^0})$.
\end{proof}

\subsection{Comparison of Competition and Monopoly}
After showing the existence of Nash equilibrium prices under a relaxed assumption, we can apply this
NE outcome to predict the revenue and social welfare in the duopoly environment, and compare them
with the corresponding results when one search engine monopolize the market. These comparative
results would be instructive in practice considering the attempt of cooperation among large search
companies such as Google and Yahoo!.\footnote{In June 2008, Google and Yahoo! announced an
advertising cooperation agreement which was later on forced to be abandoned due to antitrust concern
of government regulators. See the article ``Antitrust Concerns Kill Yahoo-Google Ad Deal,'' CNET,
November 5, 2008 (\url{http://news.cnet.com/8301-1023_3-10082800-93.html}).}

We now turn to compare the prices under competition and monopoly. The main results are given in the
following theorem.

\begin{theo}\label{theo:price_duo_vs_price_mono}
The equilibrium price in search engine 1 (or engine 2) under competition is no less (or larger) than
the optimal price when engine 1 monopolizes the market.
\end{theo}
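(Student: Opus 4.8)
The plan is to prove the two inequalities separately, writing $p_m$ for engine~1's monopoly price (i.e.\ $p^*(\mathcal I)$ in the sense of Algorithm~\ref{alg:calculate_opt_price} computed with the whole supply $S$), and to argue each by contradiction from only three ingredients: the market‑clearing identity of Proposition~\ref{DemandEqualSupply} (under the NE prices each engine's collected budget equals its price times its supply, up to the partial contribution of at most one indifferent advertiser); the price ordering $p_1^{NE}\ge p_2^{NE}$ of Proposition~\ref{prop:price_relation}; and the structural facts that $\mathcal I_1,\mathcal I_2$ partition $\mathcal I$ while $S_1+S_2=S$. Throughout we work under the assumption that an NE price pair exists.

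The direction $p_2^{NE}\le p_m$ is the short one. Assume $p_2^{NE}>p_m$; then Proposition~\ref{prop:price_relation} gives $p_1^{NE}\ge p_2^{NE}>p_m$, so every advertiser active at engine~1 has $v_i\ge p_1^{NE}>p_m$, and every advertiser active at engine~2 has $\rho_i v_i\ge p_2^{NE}>p_m$, hence $v_i\ge\rho_i v_i>p_m$ as well. Thus $\mathcal I_1^+\cup\mathcal I_2^+\subseteq\mathcal I^+(p_m)$, the union being disjoint. The money the two engines collect, $p_1^{NE}S_1+p_2^{NE}S_2$, is then at most the total budget $\sum_{i\in\mathcal I^+(p_m)}B_i$ of the monopoly's active advertisers, which by the monopoly clearing identity equals $p_m S-\alpha B_l\le p_m S$; but it also strictly exceeds $p_m(S_1+S_2)=p_m S$ because each $p_j^{NE}>p_m$. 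Contradiction.

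For $p_1^{NE}\ge p_m$, assume instead $p_1^{NE}<p_m$, so also $p_2^{NE}\le p_1^{NE}<p_m$. Split $\mathcal I^+(p_m)$ into its intersections with $\mathcal I_1$ and with $\mathcal I_2$. First I would check that each piece is absorbed by the matching engine in the duopoly: an advertiser of $\mathcal I_1$ with $v_i>p_m>p_1^{NE}$ is active at engine~1, and an advertiser of $\mathcal I_2$ with $v_i>p_m>p_1^{NE}$ satisfies $\rho_i v_i>\rho_i p_1^{NE}>p_2^{NE}$ (using $\rho_i>p_2^{NE}/p_1^{NE}$) and so is active at engine~2. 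Hence the budget carried by the $\mathcal I_1$‑piece is at most engine~1's take $p_1^{NE}S_1<p_m S_1$, and that of the $\mathcal I_2$‑piece at most $p_2^{NE}S_2\le p_1^{NE}S_2<p_m S_2$; summing and comparing with $\sum_{i\in\mathcal I^+(p_m)}B_i=p_m S-\alpha B_l$ forces $\alpha B_l>0$, i.e.\ the monopoly genuinely has an indifferent advertiser $l$ with $v_l=p_m$.

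The last and most delicate step is to exploit this advertiser $l$. Since $v_l=p_m>p_1^{NE}$, in the duopoly $l$ is no longer indifferent: whichever of $\mathcal I_1,\mathcal I_2$ contains it, $l$ becomes a strict, full‑budget participant of that engine (in the $\mathcal I_2$ case again because $\rho_l v_l>\rho_l p_1^{NE}>p_2^{NE}$). Re‑running the budget count with $B_l$ now counted in full on that engine's side sharpens the corresponding bound by $B_l$, which, since $\alpha\le 1$, more than cancels the $\alpha B_l$ slack in the monopoly identity and leaves one of the two pieces of $\mathcal I^+(p_m)$ with a budget that is simultaneously $<p_m S_k$ and $\ge p_m S_k$ --- the contradiction we want. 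I expect this indifferent‑advertiser bookkeeping to be the real obstacle: the blunt ``total collected budget'' comparison that finishes the $p_2^{NE}$ side in one line leaves a gap of exactly $\alpha B_l$ on the $p_1^{NE}$ side, and only the case split on where the monopoly's marginal advertiser sits, together with the observation that lowering its effective price promotes it to a full participant, closes that gap. The usual boundary conventions (strict versus weak value thresholds, an indifferent advertiser at one of the two engines, and the degenerate case $S_2=0$ in which engine~1 is already the monopolist) are routine and do not affect the argument.
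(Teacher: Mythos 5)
Your proposal is correct in its essentials, but it takes a genuinely different route from the paper's own proof. You argue entirely \emph{ex post}: you keep the realized budgets, use the clearing identities produced by Algorithm \ref{alg:calculate_opt_price} (the monopoly price $p^*$ satisfies $\sum_{v_i>p^*}B_i\le p^*S\le\sum_{v_i\ge p^*}B_i$), note that every strict participant spends its full budget, and obtain both bounds by counting money: if $p_2^{NE}>p^*$ the two engines would collect strictly more than $p^*S$ out of a disjoint subset of the monopoly's active budgets, which total at most $p^*S$; if $p_1^{NE}<p^*$ every advertiser with $v_i\ge p^*$ becomes a strict, full-budget spender at one of the two engines, so the duopoly must collect at least $p^*S$ while its prices yield strictly less, and your indifferent-advertiser bookkeeping (which extends verbatim if several advertisers tie at $v_i=p^*$) closes the remaining $\alpha B_l$ gap exactly as you describe. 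The paper instead works from the \emph{ex ante} viewpoint: it writes the clearing conditions in expectation, $p^*S=mE(B)[1-F(p^*)]$ together with the duopoly analogues (\ref{eq:1})--(\ref{eq:2}), bounds the participation probability at engine 2 from below via $\rho_i\ge\nu^*$ and from above via $\rho_i\le 1$, sums the two engines' conditions, and compares through the strictly decreasing function $h(p)=(1-F(p))/p$ to get $p_2\le p^*\le p_1$. Your argument is more elementary, requires no distributional assumptions, and matches the ex post definition of the NE price pair under which the equilibrium is actually constructed in Section \ref{sec:4:duo_model} (it also survives the splittable-budget relaxation, since the split advertiser still spends at most, and in the relevant case exactly, its total budget); the paper's version buys the expected, long-run reading of the comparison that its simulations then illustrate. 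Do state explicitly the harmless side conditions your strict inequalities rely on: positive supplies ($S_1\ge S_2$, with the $S_2=0$ case handled by weak inequalities), nonzero equilibrium prices when forming the ratio $p_2/p_1$, and possible ties among indifferent advertisers.
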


\begin{proof}
Assuming all discount factors are randomly drawn in the range of
$(\overline{\rho},\underline{\rho})$. According to equation (\ref{eq:S_equal_D}) and proposition
\ref{DemandEqualSupply}, the monopoly price $p^*$ satisfies the following condition:

\begin{equation}\label{eq:0}
    p^*\cdot S=m\cdot E(B)\cdot [1-F(p^*)]
\end{equation}

Now we divide the total supply arbitrarily into $S_1=\alpha S$ and $S_2=(1-\alpha)S$,
$\alpha\in[0,1]$ for search engine 1 and 2. Suppose the optimal prices are $p_1$ and $p_2$
respectively and there are $m_1$ advertisers attracted by engine 1 and $m_2$ advertisers by engine 2
where $m_1+m_2=m$, by applying equation (\ref{eq:S_equal_D}) and proposition \ref{DemandEqualSupply}
we get the following equations for both search engines,

\begin{eqnarray}
    p_1\cdot \alpha S &=& m_1\cdot {E(B)}\cdot[1-F(p_1)] \label{eq:1}  \\
    p_2\cdot (1-\alpha) S &=& \sum_{i\in\mathcal{I}_2} {E(B)}\cdot\mbox{Prob}\{\rho_i v_i>p_2\} \label{eq:2}
\end{eqnarray}

since in equilibrium $p_2/p_1=\nu^*$ and for each advertiser $i\in\mathcal{I}_2$ it holds that
$\rho_i\geq \nu^*$, we can derive that:
\begin{equation*}
    \mbox{Prob}\{\rho_i v_i>p_2\}=\mbox{Prob}\{\rho_i v_i>\nu^* p_1\}\geq\mbox{Prob}\{v_i>p_1\}=1-F(p_1)
\end{equation*}

Now the equation (\ref{eq:2}) would become:

\begin{equation}\label{eq:3}
    \nu^*(1-\alpha)\cdot p_1 S\geq m_2\cdot {E(B)}\cdot[1-F(p_1)]
\end{equation}

Summing over conditions (\ref{eq:1}) and (\ref{eq:3}), we have
\begin{equation}\label{eq:4}
    [\alpha+\nu^*(1-\alpha)]\cdot p_1 S\geq m\cdot E(B)\cdot [1-F(p_1)]
\end{equation}

Since $\nu^*\leq 1$, we have $\alpha+\nu^*(1-\alpha)\leq \alpha+(1-\alpha)=1$. Defining the following
function first,
\begin{equation*}
    h(p)\triangleq \frac{1-F(p)}{p}
\end{equation*}
which is strictly decreasing over $p$. Then by comparing conditions (\ref{eq:0}) and (\ref{eq:4}), we
get
\begin{equation*}
    h(p_1)\leq \frac{[\alpha+\nu^*(1-\alpha)]\cdot S}{m\cdot E(B)}\leq \frac{S}{m\cdot E(B)}=h(p^*)
\end{equation*}
we can infer that $p_1\geq p^*$.

Since from proposition \ref{prop:price_relation} we have $p_1\geq p_2$ and $1-F(p)$ is a monotonic
decreasing function of $p$, equation (\ref{eq:1}) would become:
\begin{equation}\label{eq:5}
    p_2\cdot \alpha S   \leq p_1\cdot \alpha S  \leq   m_1\cdot {E(B)}\cdot[1-F(p_2)]
\end{equation}

And since $\rho_i\leq 1$ for any $i\in\mathcal{I}_2$, we know that:
\begin{equation*}
    \mbox{Prob}\{\rho_i v_i>p_2\}\leq   \mbox{Prob}\{v_i>p_2\}=1-F(p_2)
\end{equation*}
thus equation (\ref{eq:2}) would become:
\begin{equation}\label{eq:6}
    p_2\cdot (1-\alpha) S \leq m_2\cdot {E(B)}\cdot[1-F(p_2)]
\end{equation}
Summing over inequalities (\ref{eq:5}) and (\ref{eq:6}), we get
\begin{equation*}
    p_2 S \leq m\cdot {E(B)}\cdot[1-F(p_2)]
\end{equation*}
So we have:
\begin{equation*}
    h(p_2)\geq \frac{S}{m\cdot E(B)}=h(p^*)
\end{equation*}
which infers that $p_2\leq p^*$.

So in general, we have that $p_2\leq p^*\leq p_1$.
\end{proof}

One natural question to the duopoly market is that whether the company acting as a follower would
merge with the leading company in the market. To answer it, we follow the conventional way of
analyzing the total revenue and social welfare under competition and monopoly. At first glance, it
seems that allowing the search engine with better technology to monopolize the market would generate
higher total revenue and social welfare since it can provide better service for both advertisers and
end users. However, it turns out the answer depends on the specific parameters of participants in the
market. We summarize the comparison results of total revenue and social welfare under competition and
monopoly in the following theorem.

\begin{theo}\label{theo:revunue_SW_duo_vs_mono}
Whether monopoly would bring in higher total revenue and social welfare than competition
\emph{depends on} the specific parameters of advertisers in the advertising systems.
\end{theo}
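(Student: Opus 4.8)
The statement is a negative, ``it depends'' assertion, so the natural route is to exhibit two concrete instances of the three-stage game whose monopoly-versus-duopoly comparisons point in opposite directions: (i) an instance in which letting engine~1 monopolize yields strictly higher total revenue \emph{and} strictly higher social welfare than the duopoly equilibrium, and (ii) an instance in which the conjunction ``monopoly dominates on both counts'' fails, e.g.\ because the duopoly equilibrium produces strictly higher total revenue. In each instance I would fix the Stage~I outcome directly — pick $\zeta$ and $q$ so that $(n_1,n_2)$, hence $(S_1,S_2)$ with $S_1\ge S_2$, take convenient values — then specify the advertiser list $\{(v_i,B_i,\rho_i)\}$, compute the monopoly price $p^*$ from Algorithm~\ref{alg:calculate_opt_price}, and compute the Nash equilibrium price pair by solving the fixed-point equation $f(\nu^*)=\nu^*$ for the piecewise-constant $f$. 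Revenue and welfare then follow from $R=p^*S$ and $SW=\sum_{i\in\mathcal I}v_iq_i$ in the monopoly case, and from $R_1+R_2=p_1^{NE}S_1+p_2^{NE}S_2$ and $SW=\sum_{i\in\mathcal I_1^+}v_iq_i+\sum_{i\in\mathcal I_2^+}\rho_iv_iq_i$ in the duopoly case.

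For instance~(i) the mechanism to exploit is twofold: the $S_2$ units sold through engine~2 are devalued by the discount factors, and competition drives $p_1^{NE}\ge p^*$ (Theorem~\ref{theo:price_duo_vs_price_mono}), which can push $p_1^{NE}$ above the value $v_i$ of some advertiser with a \emph{low} $\rho_i$. As already argued in the excerpt, such an advertiser is then excluded from engine~1 (since $v_i<p_1^{NE}$) and also from engine~2 (since $\rho_i\le p_2^{NE}/p_1^{NE}$ forces $\rho_iv_i<p_2^{NE}$), so its whole budget is lost to the system, whereas in monopoly it stays active. Choosing the remaining advertisers so that a nonzero NE price pair exists and so that this dropped budget plus the discount loss dominates any favorable reallocation, one gets $R^{mono}>R^{duo}$ and $SW^{mono}>SW^{duo}$. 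For instance~(ii) I would run the opposite effect: since $p_2^{NE}\le p^*$ (again Theorem~\ref{theo:price_duo_vs_price_mono}), engine~2's cheaper slots can admit advertisers whose value lies \emph{below} $p^*$ — advertisers the monopolist shuts out — so the total budget actually spent under duopoly exceeds that under monopoly, and since revenue equals the sum of budgets of the fully-spending active advertisers, $R^{duo}>R^{mono}$. The identity $R^{duo}-R^{mono}=(p_1^{NE}-p^*)S_1-(p^*-p_2^{NE})S_2$ also makes transparent how tuning $(S_1,S_2)$ and the two price gaps flips the sign; with more care one can try to make the duopoly welfare beat the monopoly welfare here too, giving a stronger instance.

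The main obstacle is not evaluating $R$ and $SW$ but guaranteeing that the constructed instances actually possess a nonzero Nash equilibrium price pair: Proposition~\ref{prop:NE_maynot_exist} shows this can fail, and the fixed point of $f$ may land on an atom $\nu^*=\rho_l$, i.e.\ on the ``undetermined advertiser'' case, in which event I must invoke the splittable-budget assumption and the continuity Lemma~\ref{prop:continuity_as_budget} to pin down the equilibrium split $\alpha^*$ before welfare is even well defined. I would sidestep most of this by choosing parameters that land squarely in the ``determined division'' regime (fixed point strictly between consecutive $\rho_i$), using three or four advertisers so the relevant budget sums are strictly ordered, and keeping all the needed strict inequalities — $p_1^{NE}<v_i$ for the dropped advertiser in~(i), $\rho_iv_i\ge p_2^{NE}$ for the extra advertiser in~(ii), and the non-indifference conditions that make each $R$ exactly a sum of budgets — robust to small perturbations. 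An equivalent packaging is a single one-parameter family (varying, say, a common discount level or one advertiser's budget) in which $R^{mono}-R^{duo}$ is shown to change sign, which delivers both instances at once.
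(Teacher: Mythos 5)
Your overall route is the paper's route: the theorem is proved purely by exhibiting concrete instances, and the mechanism you single out for the ``duopoly wins'' side is exactly the one the paper exploits. In Example~\ref{expl:revenue} ($v_1=1,B_1=2,\rho_1=1$; $v_2=4,B_2=2,\rho_2=0$; $S_1=S_2=0.5$) the monopolist's price $p^*=2$ shuts out advertiser~1, while engine~2's lower equilibrium price $p_2=1$ re-admits him, giving $R_1+R_2=2.5>2=R(p^*)$; Example~\ref{expl:SW} does the analogous thing for welfare ($2.75>2.5$). Note that the paper's two worked examples are \emph{both} of the ``competition beats monopoly'' type (the reverse direction is left to the surrounding discussion and simulations, e.g.\ inequality~(\ref{eq:revenue_mono:simu})), whereas you propose one explicit instance per direction, which is if anything a more literal reading of ``depends on.''

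The gap is that your text never actually produces an instance: phrases like ``choosing the remaining advertisers so that a nonzero NE price pair exists and so that the dropped budget plus the discount loss dominates any favorable reallocation'' carry the entire burden of the proof, which for this theorem \emph{is} the construction and its verification. The paper discharges that burden with two-advertiser examples in which the extreme discount factors $\rho\in\{0,1\}$ make the advertiser partition immediate and the equilibrium check reduces to verifying that $p_2/p_1$ lies strictly between $\rho_2$ and $\rho_1$ (your ``determined division'' regime); a complete write-up needs numbers at that level of explicitness, and with them the worries about Proposition~\ref{prop:NE_maynot_exist} and the undetermined-advertiser case simply do not arise. Two smaller points: your claim that ``revenue equals the sum of budgets of the fully-spending active advertisers'' is not quite right in the relevant instances --- in Example~\ref{expl:revenue} engine~2's revenue is supply-capped at $p_2S_2=0.5$ while advertiser~1 spends only a quarter of his budget --- although the directional argument survives; and your identity $R^{duo}-R^{mono}=(p_1^{NE}-p^*)S_1-(p^*-p_2^{NE})S_2$ is correct and a nice way to see why the sign can flip, but it presupposes the same total supply $S=S_1+S_2$ and the ordering $p_2^{NE}\le p^*\le p_1^{NE}$ of Theorem~\ref{theo:price_duo_vs_price_mono}, so it complements rather than replaces the explicit examples. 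For the ``monopoly wins'' side, the easiest explicit instance is a set of identical advertisers with uniformly small $\rho$, where the discount loss on $S_2$ makes both the revenue and welfare comparisons go the monopolist's way.
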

We can prove this theorem by constructing the counter-examples \ref{expl:revenue} and \ref{expl:SW}
as follows.

\begin{expl}\label{expl:revenue}
Suppose there are two advertisers $\{1,2\}$ participating in the advertising system. The value,
budget and discount factor of each advertiser are as follows: $v_1=1$, $B_1=2$, $\rho_1=1$ and
$v_2=4$, $B_2=2$, $\rho_2=0$. The total supply of advertising opportunity is $S=1$.

Under monopoly, the optimal price $p^*=2$ and the maximal revenue is $R(p^*)=p^*\cdot S=2$. For any
$p<p^*$, the corresponding revenue would be $R(p)=p\cdot S<p^* \cdot S=2$; for any $p>p^*$, since
$v_1<p$ which means advertiser 1 would not attend the system, $R(p)$ is upper bounded by the budget
of advertiser 2, i.e., $R(p)\leq B_2=2$. This analysis proves the optimality of $p^*$ and $R(p^*)$.

Under competition, advertiser 1 would choose engine 2 since $\rho_1=1$ and advertiser 2 would choose
the other engine since $\rho_2=0$. We equally divide the supply into two parts: $S_1=S_2=0.5$. Now in
engine 1, the optimal price is $p_1=4$ and $R_1=p_1\cdot S_1=2$; in engine 2, the optimal price is
$p_2=1$ and $R_2=p_2\cdot S_2=0.5$. And the price ratio $p_2/p_1=0.25$ is less than $\rho_1$ and
greater than $\rho_2$.

Therefore in this example, the competition would bring in even higher total revenue ($R_1+R_2=2.5$)
than the monopoly ($R(p^*)=2$).
\end{expl}

\begin{expl}\label{expl:SW}
There are still two advertisers in the system, with the following parameters: $v_1=2$, $B_1=0.75$,
$\rho_1=0$ and $v_2=4$, $B_2=0.25$, $\rho_2=1$. The total supply of advertising opportunity is still
$S=1$.

Under monopoly, the optimal price $p^*=1$ and the allocation vector is $(q_1,q_2)=(0.75, 0.25)$. Thus
the social welfare would be $SW=v_1q_1+v_2q_2=2.5$.

Under competition, advertiser 1 would choose engine 1 since $\rho_1=0$ and advertiser 2 would choose
the other engine since $\rho_2=1$. We still divide the supply into $S_1=S_2=0.5$. Now in engine 1,
the optimal price is $p_1=1.5$ and the social welfare $SW_1=v_1q_1=v_1S_1=0.75$; in engine 2, the
optimal price is $p_2=0.5$ and $SW_2=v_2q_2= v_2 S_2=2$. And the price ratio $p_2/p_1=1/3$ is greater
than $\rho_1$ and less than $\rho_2$.

Therefore in this example, the competition would bring in even higher social welfare
($SW_1+SW_2=2.75$) than the monopoly ($SW=2.5$).
\end{expl}

Theorem \ref{theo:revunue_SW_duo_vs_mono} shows that there is no common conclusion on whether the
existence of an inferior company (or product) in the market would raise or drive down the social
welfare (or total revenue). Our observation here based on the particular search engine competition
model coincides with the finding in the recent paper \cite{QWang} that the viability of
\emph{differentiated services} scheme depends on the specific characteristics of users in the system.
The services provided by search engine 1 and engine 2 can be regarded as the 1st and 2nd class
services in \cite{QWang} where the 1st class is usually charged higher price than the 2nd (analogous
to our proposition that $p_1^{NE}\geq p_2^{NE}$).

Recall that our conclusions above are based on the \emph{ex post} perspective which includes all
possible instances of the competitive market. To show the more general \emph{ex ante} results under
common parameter setting of participants, we conduct the simulation in the next section.

\section{Simulation Results and Observations}\label{sec:5:simul}
In this section we present some simulation results showing the effects of different parameters in our
model. There are four major criteria we would like to explore in the model:
\begin{enumerate}[({a}.1)]
  \item \emph{Prices}: We would like to compare the equilibrium prices of both engines with the
  monopoly price if there is only one search engine dominates the market. In the following section we denote $(p_1,p_2)$ as the duopoly prices and $p_M$ as the monopoly price.
  \item \emph{Revenues}: It would be intriguing to study the comparative results of total revenues
  under competition and monopoly. The gap between revenues under competition and monopoly would serve
  as a signal of whether the leading company would like to propose a merger or acquisition to its
  competitor. A huge gap would infer that reaching certain cooperation agreement between the two competitors would
  significantly promote the revenues for both companies.
  \item \emph{Aggregate Utility of Advertisers}: We compare the aggregate utility of advertisers to
  see whether monopoly would be detrimental to the interest of advertisers, and if so, how severe
  the loss would be. In particular, we examine the aggregate utility for brand advertisers who
  benefits from the relatively lower price of the inferior search engine in the duopoly market.
  \item \emph{Social Welfare}: Social welfare can be regarded as the \emph{realized} value of
  advertisers and is the benchmark for addressing the interest of the community as a whole. Under
  competition, the social welfare is computed according to the following equation:
  \begin{equation}\label{eq:SW:simu}
    SW=\sum_{i\in\mathcal{I}_1}v_i q_i+\sum_{i\in\mathcal{I}_2}\rho_i v_i q_i.
  \end{equation}
  where $q_i$ is amount of supply allocated to advertiser $i$.
\end{enumerate}

In the following, we carry out a set of simulation to investigate the comparative results under
different parameter settings. For each simulation setting, we randomly generate 5000 instances of
parameters and calculate the average value of each criterion. The expected values from \emph{ex ante}
perspective can then be approximated by the average values of large amounts of \emph{ex post}
instances.

\textbf{(1). Baseline Setting:}

We consider two search engines equally dividing the market and the total supply is normalized to
unity. Thus the supply of either search engine is $S_1=S_2=0.5$. Advertisers' values are uniformly
distributed over $(18,20)$, and their budgets are also drawn from uniform distribution with
expectation $E(B)=4$. Discount factors of advertisers are uniformly distributed over $(0.5, 0.9)$.
Therefore there would be expectedly one half of advertisers with discount factors larger than the
average value $E(\rho)=0.7$, which we define as the \emph{brand advertisers}.

\begin{figure}[tbhp]
  \centering
  \subfloat[Prices]{
    \label{simu:fig1_a}
    \begin{minipage}{0.45\textwidth}
        \centering
        \includegraphics[width=\textwidth]{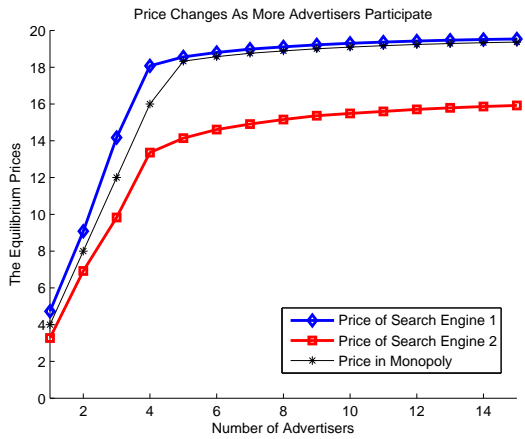}\\
    \end{minipage}
  }
  \subfloat[Revenues]{
    \label{simu:fig1_b}
    \begin{minipage}{0.45\textwidth}
        \centering
        \includegraphics[width=\textwidth]{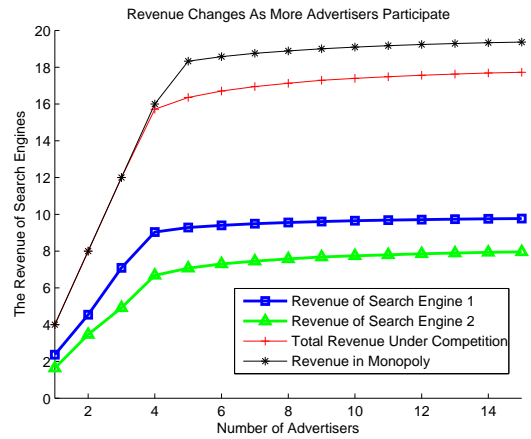}\\
    \end{minipage}
  }\\
  \subfloat[Aggregate Utility of Advertisers]{
    \label{simu:fig1_c}
    \begin{minipage}{0.45\textwidth}
        \centering
        \includegraphics[width=\textwidth]{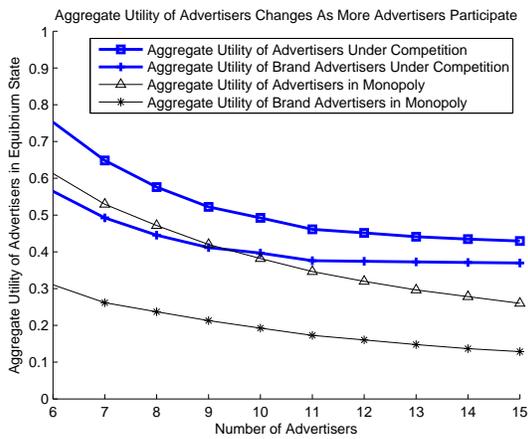}\\
    \end{minipage}
  }
  \subfloat[Social Welfare]{
    \label{simu:fig1_d}
    \begin{minipage}{0.45\textwidth}
        \centering
        \includegraphics[width=\textwidth]{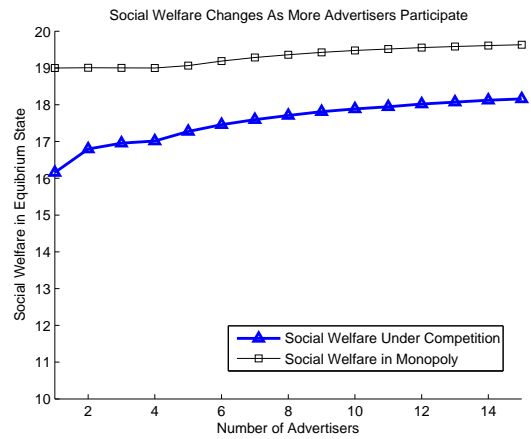}\\
    \end{minipage}
  }
  \caption{Baseline Setting}\label{simu:fig1}
\end{figure}

The simulation results under baseline setting are presented in figure \ref{simu:fig1}. We can make
the following observations from figure \ref{simu:fig1}(a)-(d):

\begin{enumerate}[1)]
  \item As the number of advertisers increases, the prices, revenues and social welfare would all get
  raised except the aggregate utility of advertisers. This is because as more advertisers
  participate, the demand for the limited supply would get boosted, which would finally drive up the unit
  price per supply and raise the revenue of search engines. As the price rises, the utility of
  advertisers would keep decreasing as seen in figure \ref{simu:fig1}(c). The social welfare can
  still be improved since when more advertisers appears, only those advertisers with higher values can
  stay and be allocated with certain amount of supply. Thus the \emph{realized} values of advertisers
  would be larger and the social welfare get enhanced.
  \item After the number of advertisers reaches about five, the growth of prices and revenues seems
  saturated: more advertisers would not bring evident enhancement in prices and revenues. This can be
  derived from our parameter setting: $E(B)/E(v)=4/19\thickapprox 0.2$ is the approximate amount of demand
  for each advertiser, and since the total supply is one, in expectation it would be
  sufficient for five advertisers to consume all the supply.
  \item Figure \ref{simu:fig1}(a) corresponds with theorem \ref{theo:price_duo_vs_price_mono} that
  the monopoly price is smaller than the duopoly price $p_1$ of engine 1 and larger than price $p_2$ of engine 2.
  We further notice that the monopoly price is actually very close to $p_1$ but $p_2$ is much smaller
  than $p_1$. This is because the monopoly engine and engine 1 in competition face advertisers with
  the same distribution of values. Recall that value is the maximal willingness to pay for
  advertisers, thus when there are too many advertisers competing with each other, the price would
  approach the maximal possible value, which is $20$ according to the distribution range. However,
  for search engine 2, the \emph{actual values} of advertisers are the \emph{original values} discounted by $\rho$.
  The lower $\rho$ is, the larger the gap between $p_1$ and $p_2$ would be.
  \item Figure \ref{simu:fig1}(b) shows that revenue of search engine 1 is larger than that of engine
  2. This can be easily deducted since the revenue of each engine is $R_1=p_1\cdot S_1$, $R_2=p_2\cdot S_2$ and we have $p_1>p_2$, $S_1=S_2$.
  As we have mentioned, the monopoly price $p_M$ is approximately equal to $p_1$. Therefore the
  monopoly revenue can be denoted as follows:
  \begin{eqnarray}
    R_M &=& p_M\cdot S\thickapprox p_1\cdot S=p_1\cdot S_1 + p_1\cdot S_2 \nonumber\\
        &=& R_1+\frac{p_1}{p_2}R_2=R_1+\frac{R_2}{\rho^*}>R_1+R_2         \label{eq:revenue_mono:simu}
  \end{eqnarray}
  where $\rho^*$ is the discount factor of the indifferent advertisers which is always less than one.
  This inequality explains the gap between total revenue under competition and monopoly in figure
  \ref{simu:fig1}(b).
  \item The utility of advertisers depends on two factors: the value and the price. Compared with the
  monopoly, under competition a portion of advertisers could enjoy a relatively lower price which would
  result in higher utility; at the same time, due to the effect of $\rho$, the values of advertisers
  in engine 2 get discounted which would cause lower utility. When the positive factor of lower price
  dominate the negative factor of lower value, the utility under competition would be greater and vice
  versa. Since in our baseline setting we set a relatively large $\rho$, the negative factor would be
  small and advertisers in engine 2 can benefit from the lower price. This conjecture can be verified
  in figure \ref{simu:fig1}(c). Since in average \emph{brand advertisers} account for half of all
  advertisers, in monopoly the utility of brand advertisers is always half of the utility of all
  advertisers as shown in figure \ref{simu:fig1}(c). However, under competition the brand
  advertisers would benefit more than the rest advertisers since they have higher discount factors
  which means lower negative effect on values but confronting a lower price in engine 2.
  \item Figure \ref{simu:fig1}(d) indicates that the social welfare under competition is lower than
  that under monopoly since the realized values in equation (\ref{eq:SW:simu}) get discounted due to the factor
  $\rho$.
\end{enumerate}

\textbf{(2). Effect of Supplies}

We now change the supplies to $S_1=0.9$ and $S_2=0.1$ while all other parameters remain the same. The
simulation results are presented in figure \ref{simu:fig2}.

In this setting one search engine plays the leading role in the market and the follower can only take
a small fraction of market share. This assumption is more realistic considering the current dominant
position of Google in most areas of the world.\footnote{In the United States around 72 percent of the
total search volumes are conducted on Google while Yahoo and Bing jointly account for about 25
percent. Source from ``Top 20 Sites \& Engines,'' Hitwise, May 20, 2010
(\url{http://www.hitwise.com/us/datacenter/main/dashboard-10133.html}). In some other countries like
France, UK and Germany, Google even possessed a market share of over 90 percent. Source from the
article ``Google's Market Share in Your Country,'' March 13, 2009
(\url{http://googlesystem.blogspot.com/2009/03/googles-market-share-in-your-country.html}).} Figure
\ref{simu:fig2}(a) shows little difference with the corresponding price curves in figure
\ref{simu:fig1}(a) since all prices approach the maximal possible value when there are sufficient
number of advertisers in the market. Since the supply of engine 2 decreases, the revenue of engine 2
which is denoted by $R_2=p_2\cdot S_2$ would also drop. According to equation
(\ref{eq:revenue_mono:simu}), the gap of the total revenues is
$R_M-(R_1+R_2)\thickapprox(\frac{1}{\rho^*}-1)R_2$. Therefore when $R_2$ is small, the gap would also
be negligible. This is verified by figure \ref{simu:fig2}(b). This result also demonstrates that even
when the follower takes a small portion of market share and provides service of relatively lower
quality, it can still make nontrivial profit through competition and survive in the market. In figure
\ref{simu:fig2}(c), the gain of aggregate utility under competition is tiny compared with that under
monopoly since engine 2 can only provide limited supply (10\% of the total supply) and only very few
of advertisers can take advantage of it. Figure \ref{simu:fig2}(d) shows that the loss of social
welfare under competition is very small since only 10\% of the total supply is of lower quality,
i.e., the second term in equation (\ref{eq:SW:simu}) is insignificant.

\begin{figure}[ht]
  \centering
  \subfloat[Prices]{
    \label{simu:fig2_a}
    \begin{minipage}{0.45\textwidth}
        \centering
        \includegraphics[width=\textwidth]{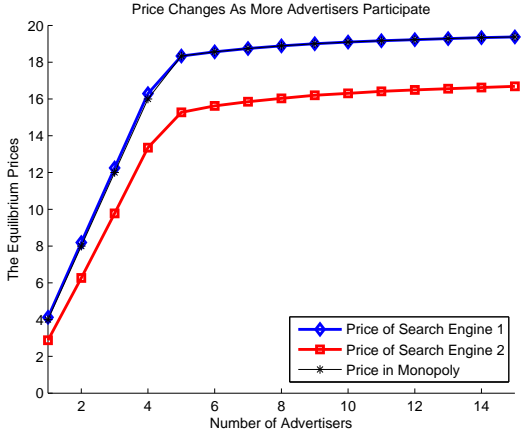}\\
    \end{minipage}
  }
  \subfloat[Revenues]{
    \label{simu:fig2_b}
    \begin{minipage}{0.45\textwidth}
        \centering
        \includegraphics[width=\textwidth]{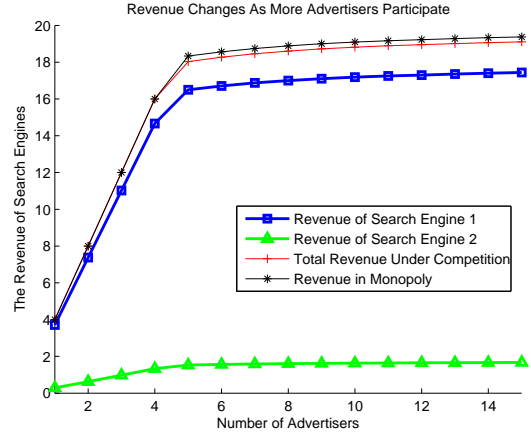}\\
    \end{minipage}
  }\\
  \subfloat[Aggregate Utility of Advertisers]{
    \label{simu:fig2_c}
    \begin{minipage}{0.45\textwidth}
        \centering
        \includegraphics[width=\textwidth]{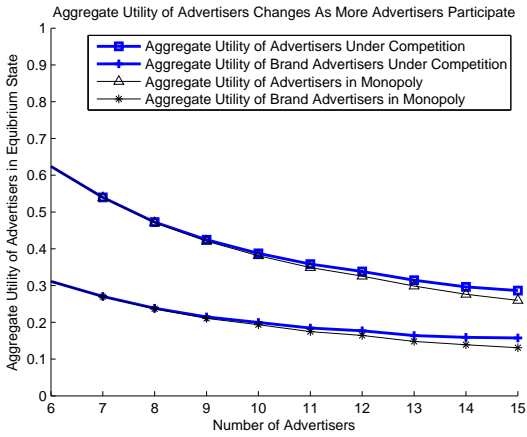}\\
    \end{minipage}
  }
  \subfloat[Social Welfare]{
    \label{simu:fig2_d}
    \begin{minipage}{0.45\textwidth}
        \centering
        \includegraphics[width=\textwidth]{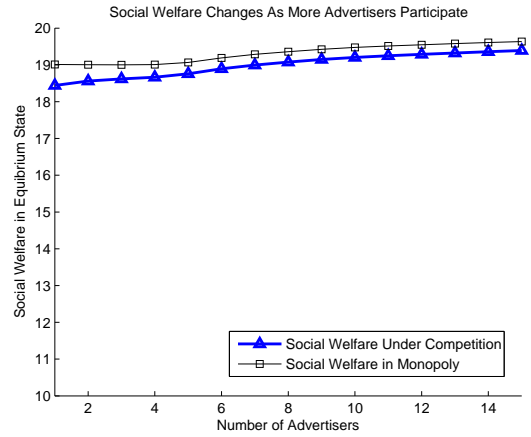}\\
    \end{minipage}
  }
  \caption{When Supplies Change To $S_1:S_2=9:1$}\label{simu:fig2}
\end{figure}

\textbf{(3). Effect of Discount Factors}

We now turn to investigate the effect of technology gap between two search engines. Let the discount
factors be drawn uniformly on $(0.1, 0.5)$ and all other parameters are the same as the baseline
setting.

As figure \ref{simu:fig3}(a) shows, the equilibrium price $p_1$ of engine 1 and the monopoly price
$p_M$ are still close to the maximal value of advertisers and $p_2$ approaches the maximal discounted
value $\rho v$. As $\rho$ decreases, the price of engine 2 also diminishes compared with $p_2$ in
figure \ref{simu:fig1}(a) and \ref{simu:fig2}(a).

Since the revenue $R=p\cdot S$ and from above analysis we know $p_1$ has little change and $p_2$
diminishes, the revenue $R_1$ of engine 1 would stay almost the same while $R_2$ reduces as shown in
figure \ref{simu:fig3}(b). Since we have also mentioned that the gap between the total revenues
approximates to $R_M-(R_1+R_2)\thickapprox(\frac{1}{\rho^*}-1)R_2$, when $\rho$ is small, the gap
would become larger. This can be easily seen by comparing the corresponding revenue curves in figure
\ref{simu:fig1}(b) and figure \ref{simu:fig3}(b).

The aggregate utilities of advertisers in monopoly are the same in figure \ref{simu:fig1}(c) and
figure \ref{simu:fig3}(c); however, the aggregate utilities under competition in figure
\ref{simu:fig3}(c) is much smaller than those in figure \ref{simu:fig1}(c) due to the negative effect
of $\rho$ on advertisers' values. In figure \ref{simu:fig3}(c) we see that there is certain
intersection between utility under competition and monopoly. When there are only a few of
advertisers, the prices are still low and the main factor affecting utility is the value. Under
competition the existence of $\rho$ would drive down advertisers' values and thereby results in lower
aggregate utility. As the number of advertisers increases, the monopoly price $p_M$ would approach
the maximal value $v_{max}=20$ and the aggregate utility would gradually reduce to zero. However,
even when $p_1$ and $p_M$ approach $20$, the rest of advertisers whose discount factors are larger
than the equilibrium price ratio $\rho^*=p_2/p_1$ can still obtain nontrivial utility which can be
approximated as $(\bar{\rho} v-p_2)S_2\thickapprox (\bar{\rho} v-\rho^* v)S_2$ ($\bar{\rho}$ denotes
the average value of advertisers with discount factors larger than the price ratio $\rho^*$).

Figure \ref{simu:fig3}(d) displays the hug gap between social welfare under competition and monopoly.
Since half of the supply ($S_2=0.5$) is allocated to advertisers with discount factors less than
$0.5$, the realized values in equation (\ref{eq:SW:simu}) would become significantly smaller than
social welfare under monopoly.

\begin{figure}[ht]
  \centering
  \subfloat[Prices]{
    \label{simu:fig3_a}
    \begin{minipage}{0.45\textwidth}
        \centering
        \includegraphics[width=\textwidth]{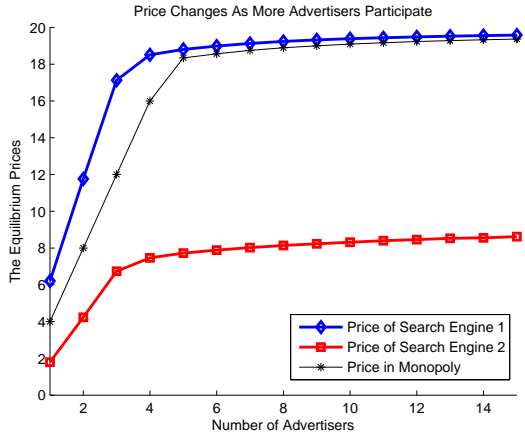}\\
    \end{minipage}
  }
  \subfloat[Revenues]{
    \label{simu:fig3_b}
    \begin{minipage}{0.45\textwidth}
        \centering
        \includegraphics[width=\textwidth]{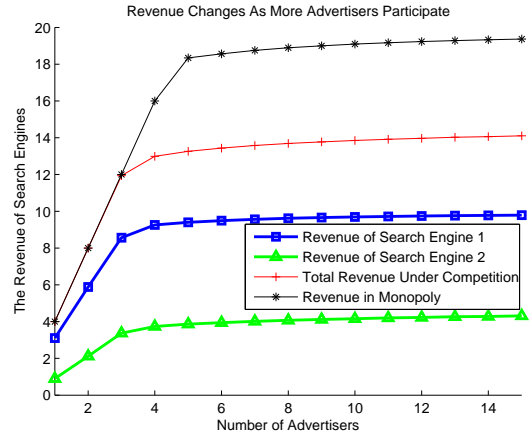}\\
    \end{minipage}
  }\\
  \subfloat[Aggregate Utility of Advertisers]{
    \label{simu:fig3_c}
    \begin{minipage}{0.45\textwidth}
        \centering
        \includegraphics[width=\textwidth]{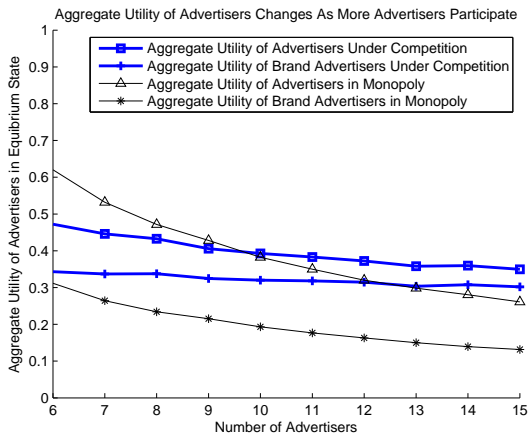}\\
    \end{minipage}
  }
  \subfloat[Social Welfare]{
    \label{simu:fig3_d}
    \begin{minipage}{0.45\textwidth}
        \centering
        \includegraphics[width=\textwidth]{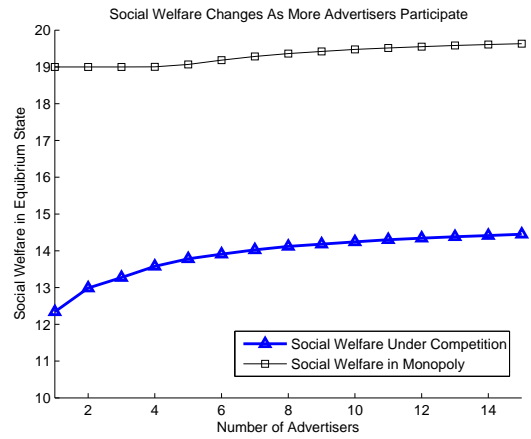}\\
    \end{minipage}
  }
  \caption{When $\rho$ Belongs to $(0.1, 0.5)$}\label{simu:fig3}
\end{figure}

\section{Conclusion and Future Work}\label{sec:6:conclu}
We propose an analytical framework to model the interaction of publishers, advertisers and users
under monopoly and duopoly scenarios. For monopoly market, we can give the analytical results of
price and revenue for both ex ante and ex post case. For duopoly market, we formulate a three-stage
dynamic game to model the search engines' competition for both users and advertisers and prove the
existence of Nash equilibrium from \emph{ex post} perspective. To see the \emph{long-term} effect of
competition from \emph{ex ante} perspective, we carry out computer simulations for different settings
of participants' parameters. The comparative results of revenues and social welfare under competition
and monopoly are then presented and discussed extensively in the paper. Our analysis could provide
some insight in regulating the search engine market and protecting the interests of advertisers and
end users. Although the cooperation between search engines can probably bring more total revenues,
advertisers and users may be averse to such plan which eliminates their freedom to choose from
diverse services provided by different search companies.

There are several possible ways to extend our work. Throughout our paper, we implicitly assume that
advertisers would reveal their true parameters such as values and budgets to the search engines.
Since our framework is by no means strategy-proof, how would rational advertisers' strategies affect
our conclusions would be an interesting question for further investigation. Another non-trivial
problem is how to associate our analytical result of revenue from \emph{one} particular keyword with
the practical revenue of an industrial search company which gathers from \emph{numerous} keywords
queried by end users everyday. Besides, to be in line with the practical advertising system nowadays,
we will consider incorporating the quality factor of advertisement for the revenue-ranking rule as
well as the generalized second-price auction prevailing in major search engines. Finally, it would be
intriguing to extend our model for multiple search engines scenario besides the basic duopoly
scenario.


\end{document}